\DeclareMathOperator{\prev}{prev}
\DeclareMathOperator{\next}{next}
\DeclareMathOperator{\Construct}{construct}
\DeclareMathOperator{\Delete}{delete}
\DeclareMathOperator{\Undelete}{undelete}
\DeclareMathOperator{\Unwind}{unwind}
\DeclareMathOperator{\Advance}{advance}
\DeclareMathOperator{\Small}{small}
\DeclareMathOperator{\Peek}{peek}
\newcommand{\mydot}[1]{\smash{\dot{#1}}}
\newcommand{\myddot}[1]{\smash{\ddot{#1}}}
\newcites{online}{Online Forums and Code Repositories}
\newcites{software}{Software and Hardware}
\newcommand{\rmfcite}{\citesoftware[p.~1507]{r-manual}}
\newtheorem{lemma}{Lemma}
\newcommand{\algorithm}[2]{%
\begin{framed}%
\begin{description}[style=nextline,parsep=2pt,itemsep=9pt,#1]%
#2
\end{description}%
\vspace{-4mm}%
\end{framed}%
\vspace{-4mm}%
}
\newcommand{\block}[1]{%
\begin{itemize}[nolistsep,label=,parsep=2pt,topsep=0pt,leftmargin=10mm]%
\item #1%
\end{itemize}%
\vspace{1pt}%
}
\newcommand{\codelabel}[1]{\hfill #1}
\begin{document}

\vspace*{\stretch{1}}
\begin{flushleft}
{\huge\bf Median Filtering is Equivalent to Sorting \par}
\bigskip
\bigskip
\textbf{Jukka Suomela}

Helsinki Institute for Information Technology HIIT,\\
Department of Information and Computer Science, \\ Aalto University, Finland \\
jukka.suomela@aalto.fi
\end{flushleft}

\bigskip
\paragraph{Abstract.}

This work shows that the following problems are equivalent, both in theory and in practice:
\begin{itemize}
    \item \emph{median filtering}: given an $n$-element vector, compute the sliding window median with window size $k$,
    \item \emph{piecewise sorting}: given an $n$-element vector, divide it in $n/k$ blocks of length $k$ and sort each block.
\end{itemize}
By prior work, median filtering is known to be at least as hard as piecewise sorting: with a single median filter operation we can sort $\Theta(n/k)$ blocks of length $\Theta(k)$. The present work shows that median filtering is also as easy as piecewise sorting: we can do median filtering with one piecewise sorting operation and linear-time postprocessing. In particular, median filtering can directly benefit from the vast literature on sorting algorithms---for example, adaptive sorting algorithms imply adaptive median filtering algorithms. 

The reduction is very efficient in practice---for random inputs the performance of the new sorting-based algorithm is on a par with the fastest heap-based algorithms, and for benign data distributions it typically outperforms prior algorithms.

The key technical idea is that we can represent the sliding window with a pair of sorted doubly-linked lists: we delete items from one list and add items to the other list. Deletions are easy; additions can be done efficiently if we \emph{reverse the time} twice: First we construct the full list and delete the items in the reverse order. Then we \emph{undo each deletion} with Knuth's \emph{dancing links} technique.

\thispagestyle{empty}
\vspace*{\stretch{1.3}}
\clearpage
\setcounter{page}{1}

\section{Introduction}\label{sec:intro}

\paragraph{Median filter.}

We study the following problem, commonly known as the \emph{median filter}, \emph{sliding window median}, \emph{moving median}, \emph{running median}, \emph{rolling median}, or \emph{median smoothing}:
\begin{itemize}[noitemsep]
    \item \textbf{Input:} vector $(x_1,x_2,\dotsc,x_n)$ and window size $k$.
    \item \textbf{Output:} vector $(y_1,y_2,\dotsc,x_{n-k+1})$, where $y_i$ is the median of $(x_i,x_{i+1},\dotsc,x_{i+k-1})$.
\end{itemize}
Median filtering and its multidimensional versions are commonly used in digital signal processing \citesoftware{r-manual,mathematica-mf,matlab-mf,octave-mf,scipy-mf} and image processing \citesoftware{photoshop-mf,gimp-mf}; see Figure~\ref{fig:example} for a simple example that demonstrates how efficiently a median filter can recover a corrupted signal.

\paragraph{Contribution.}

This work gives a new, simple and efficient algorithm for median filtering. The new algorithm is based on sorting; there are two phases:
\begin{enumerate}[noitemsep]
    \item \textbf{Piecewise sorting}: divide the input vector in $n/k$ blocks of length $k$, and sort each block.
    \item \textbf{Postprocessing}: compute the output vector in linear time.
\end{enumerate}
If we use a comparison sort, the worst-case running time is $O(n \log k)$, which matches the previous heap-based algorithms~\cite{astola89median,juhola91comparison,hardle95median-smooth}. However, in the new algorithm we can easily plug in any sorting algorithm that exploits the properties of our input vectors (e.g., integer sorting and adaptive sorting), and we can also benefit from sorting algorithms designed for modern computer architectures (e.g., cache-efficient sorting and GPU sorting).

The new algorithm is asymptotically optimal for any reasonable input distribution and model of computing, assuming that we have an optimal sorting algorithm for the same setting. There is a matching lower bound~\cite{juhola91comparison,krizanc05range-mode} that shows that median filtering is at least as hard as piecewise sorting: with a single median filter operation we can sort $\Theta(n/k)$ vectors of length $\Theta(k)$.

The new sorting-based median filter algorithms (with off-the-self sorting algorithm implementations) is \textbf{very efficient in practice} on modern hardware---for random inputs the performance is in the same ballpark as the performance of the best heap-based algorithms, and e.g.\ for partially sorted inputs it typically outperforms the heap-based algorithms by a large factor. Both a simple Python implementation and a highly optimised C++ implementation are available online~\citeonline{suomela-mf}, together with a testing framework and numerous benchmarks that compare the new algorithm with 9 other implementations---including those from \emph{R}, \emph{Mathematica}, \emph{Matlab}, \emph{Octave}, and \emph{SciPy}.

\paragraph{Techniques.}

On a high-level, the postprocessing phase maintains a pair of sorted doubly-linked lists, $L_A$ and $L_B$, so that their union $L_A \cup L_B$ represents the sliding window. Initially, $L_A$ contains the first block of data and $L_B$ is empty. We remove old items from $L_A$ and add new items to $L_B$ until $L_A$ becomes empty and $L_B$ contains the second block of data. We repeat this for each block of input.

To efficiently find the median of $L_A \cup L_B$, we can maintain a pair of pointers, one pointing to $L_A$ and another pointing to $L_B$, and proceed as if we were in the middle of merging two sorted lists.

The key challenge is related to the maintenance of $L_B$. Deletions from a sorted doubly-linked list are easy, but insertions are hard. The key idea is to \textbf{reverse the time}: instead of adding some elements $z_1, z_2, \dotsc, z_k$ to $L_B$ one by one, we start with a list that contains all of these elements and delete them one by one, in the reverse order $z_k, z_{k-1}, \dotsc, z_1$. Now to solve the original problem of adding elements one by one, it is sufficient to \textbf{undo the deletions} one by one. With doubly linked lists, this is very efficiently achieved with Knuth's \textbf{dancing links} technique~\cite{knuth00dancinglinks}.

\begin{figure}
    \centering
    \includegraphics{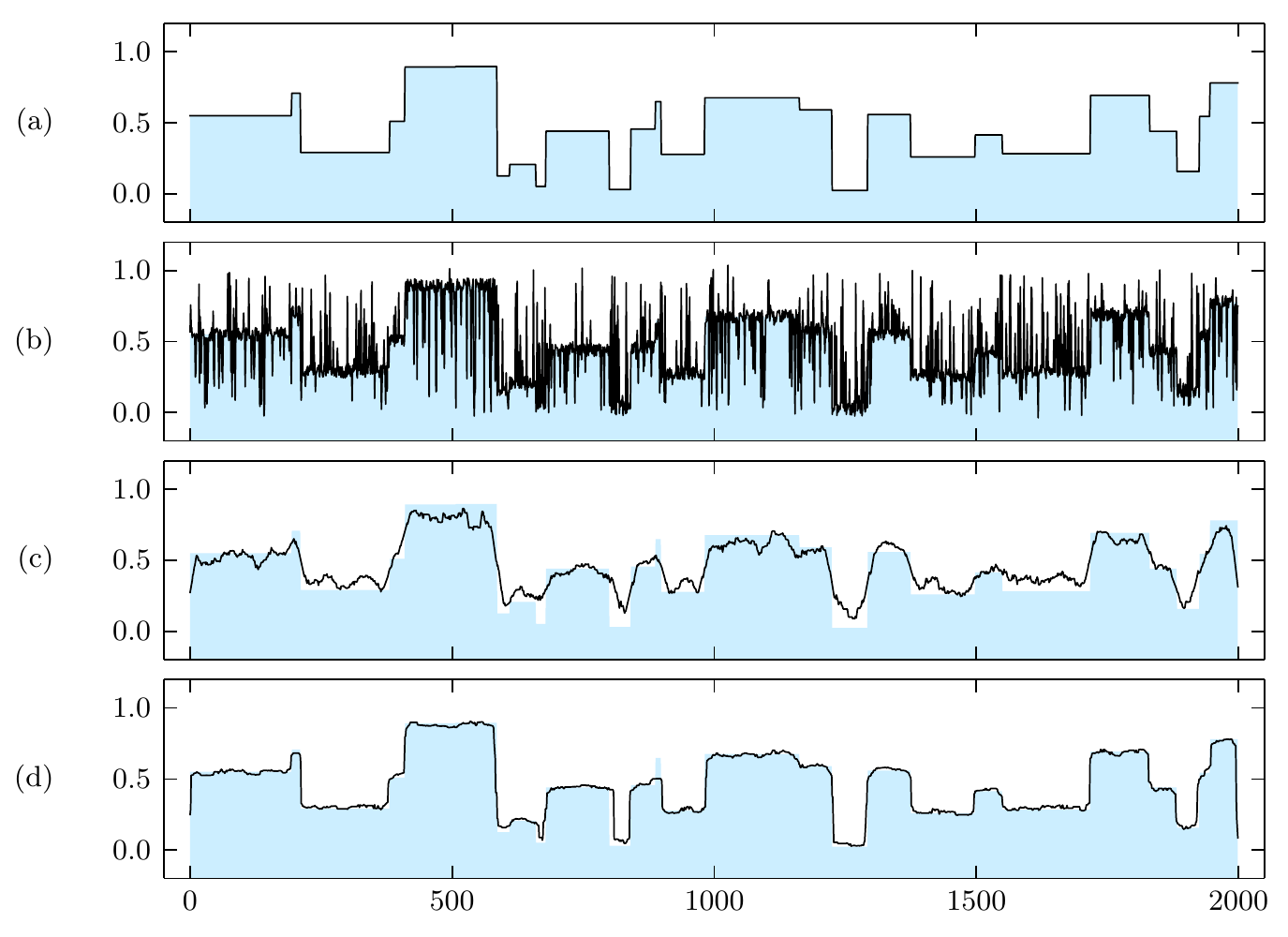}
    \caption{The median filter can recover corrupted data much better than e.g.\ moving average filters. (a)~Original data, $n=2000$. (b)~25\% of data points corrupted, some random noise added. (c)~Moving average filter applied, window size $k=25$. (d)~Median filter applied, window size $k=25$. In all figures, the shaded area represents the original data.}\label{fig:example}
\end{figure}

\section{Prior Work}

\paragraph{Algorithms.}

There is, of course, a trivial algorithm for median filtering in time $O(n k)$: simply \emph{find the median separately for each window}. This approach, together with sorting networks, can be attractive for hardware implementations of median filters~\cite{oflazer83median}, but as a general-purpose algorithm it is inefficient.

Non-trivial algorithms presented in the literature are unanimously based on the following idea: \emph{maintain a data structure that represents the sliding window}. Such a data structure needs to support three operations: ``construct'', ``find the median'', and ``remove the oldest element and add a new element''. With such a data structure, one can first construct it with elements $x_1, x_2, \dotsc, x_k$, and then process elements $x_{k+1}, x_{k+2}, \dotsc, x_n$ one by one, in this order. Concrete ideas for the implementation of the window data structure can be classified as follows:
\begin{enumerate}
    \item Data structures for $B$-bit integers. For a small $B$, we can easily maintain a histogram with $2^B$ buckets. However, to find the new median we need to find an adjacent unoccupied bucket. The following approaches have been discussed in the literature:
    \begin{enumerate}[noitemsep]
        \item linear scanning \cite{huang79median,ateman80median,juhola91comparison}: worst-case running time $\Theta(n 2^B)$
        \item binary trees \cite{ateman80median,juhola91comparison}: worst-case running time $\Theta(n B)$
        \item van Emde Boas trees \cite{juhola91comparison}: worst-case running time $\Theta(n \log B)$.
    \end{enumerate}
    \item Efficient comparison-based data structures with a $\Theta(n \log k)$ worst-case running time:
    \begin{enumerate}[noitemsep]
        \item a maxheap-minheap pair \cite{astola89median,juhola91comparison,hardle95median-smooth}
        \item binary search trees \cite{juhola91comparison}
        \item finger trees \cite{juhola91comparison}.
    \end{enumerate}
    \item Inefficient comparison-based data structures with a $\Theta(n k)$ worst-case running time:
    \begin{enumerate}[noitemsep]
        \item doubly-linked lists \cite{juhola91comparison}
        \item sorted arrays \cite{ahmad87median,juhola91comparison}.
    \end{enumerate}
\end{enumerate}
In summary, the search for efficient median filter algorithms has focused on the design of an efficient data structure for the sliding window. While it is known that 2-dimensional median filtering can benefit from a clever traversal order \cite{perreault07median}, it seems that all existing algorithms for 1-dimensional median filtering are based on the idea of a doing a single, uniform, in-order traversal of the input vector.

It seems that the present work is the first deviation from this trend in the long history of median filtering algorithms. In essence, we see median filtering as an algorithmic challenge---instead of asking how to construct an efficient \emph{data structure} for the sliding window, we ask how to \emph{pre-process} the input vector so that the sliding window is much easier to maintain.

\paragraph{Applications and Implementations.}

Median filtering has been applied in statistical data analysis at least since 1920s \cite{king24seasonal}, and it was popularised by Tukey in 1970s \cite[Section 7A]{tukey77eda}.

Nowadays, a median filter is a standard subroutine in numerous scientific computing environments and signal processing packages. In \emph{R} it is called ``runmed'' \rmfcite{}, and in \emph{Mathematica} it is called ``MedianFilter'' \citesoftware{mathematica-mf}. \emph{Matlab}'s Signal Processing Toolbox, \emph{GNU Octave}'s ``signal'' package, and \emph{SciPy}'s module ``scipy.signal'' all provide a median filter function called ``medfilt1'' \citesoftware{scipy-mf,matlab-mf,octave-mf}.

Multidimensional generalisations of the median filter are commonly used in image processing. For example, in \emph{Photoshop} there is a noise reduction filter called ``Median'' \citesoftware{photoshop-mf}, and in \emph{Gimp} there is a ``Despeckle'' filter, which is a generalisation of the 2-dimensional median filter \citesoftware{gimp-mf}.

Surprisingly, most of the existing implementations of the median filter in scientific computing environments are very inefficient for a large $k$. The experiments conducted in this work demonstrate that the median filter functions in the current versions of \emph{Matlab}, \emph{Mathematica}, \emph{Octave}, and \emph{SciPy} all exhibit approximately $\Theta(n k)$ complexity for random inputs (see Figure~\ref{fig:othera} for examples). It should be noted that these software packages typically provide very efficient routines for sorting, which would make the algorithm presented in this work relatively easy to implement.

The only major software package with an efficient $\Theta(n \log k)$ median filter implementation seems to be \emph{R}\@. For large values of $k$, the runmed function in \emph{R} applies a high-quality implementation of the double-heap data structure \cite{astola89median,juhola91comparison,hardle95median-smooth}. The end result is very efficient both in theory and in practice, for a wide range of $n$ and~$k$ (see Figures \ref{fig:othera} and \ref{fig:otherb} for examples).

We are aware of only one general-purpose median filter implementation that consistently outperforms \emph{R}: an open source C implementation by AShelly from 2011~\citeonline{ashelly-answer,ashelly-code}. This is, again, an implementation of the double-heap technique. Raffel~\citeonline{raffel-code} has adapted this implementation to C++, and we will use Raffel's version as a baseline in our experiments.

\paragraph{Lower Bounds.}

There is a simple argument that shows that median filtering is at least as difficult as piecewise sorting---see, e.g., Juhola et al.~\cite{juhola91comparison} and Krizanc et al.~\cite{krizanc05range-mode}. Assume that $k = 2h+1$, and assume that we want to sort $n / (3h+2)$ blocks of size $h+1$. Construct the input vector $x$ so that before each block we have $h$ times the value $-\infty$ and after each block we have $h$ times the value $+\infty$. If we now apply the median filter, it is easy to see that in the output each block is sorted.

Hence with some linear-time preprocessing and postprocessing, and $O(1)$ invocations of the median filter operation, we can sort $n/k$ blocks of length $k$. This work shows that the converse is also true.

\section{Algorithm Overview}

Figure~\ref{fig:overview} provides an illustration of the key definitions and the overall behaviour of the algorithm.

\begin{figure}
    \centering
    \includegraphics[page=1]{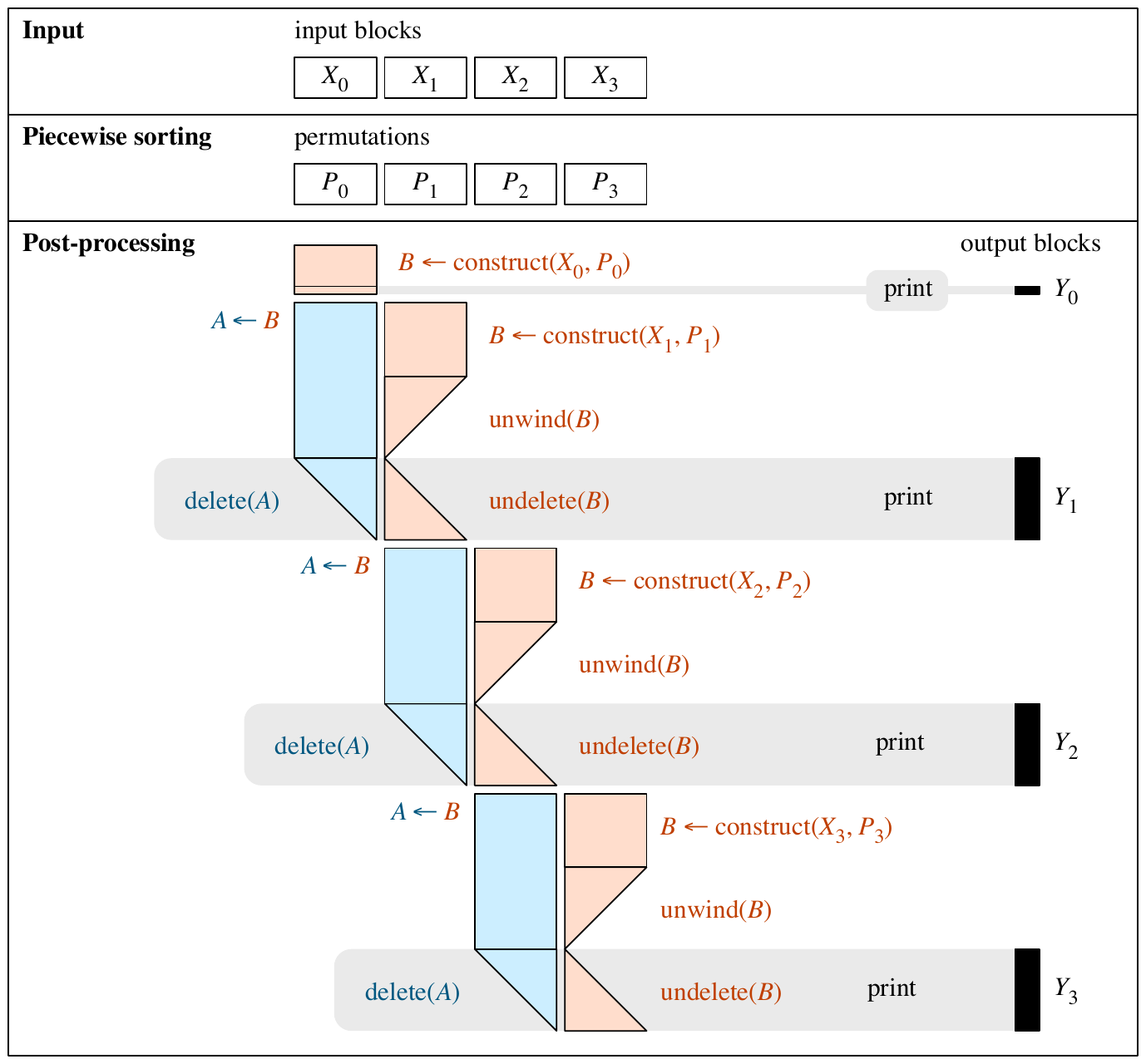}
    \caption{Algorithm overview.}\label{fig:overview}
\end{figure}

\paragraph{Preliminaries.}
To keep the presentation easy to follow, we will assume that $n = bk$ and $k = 2h + 1$, for integers $h$ and $b$. Here $h$ is the size of a \emph{half-window} and $b$ is the number of \emph{blocks}. Extending the algorithm to arbitrary $n$ and $k$ is straightforward.

Throughout this work, we use arrays with bracket notation and \textbf{0-based indexing}: if $\alpha$ is an array of length $k$, then its elements are $\alpha[0], \alpha[1], \dotsc, \alpha[k-1]$. We will partition input vector $x$ and output vector $y$ in $b$ arrays of length $k$, as follows:
\newcommand{\myunderbrace}[2]{\underbrace{#1}_{\displaystyle #2}}
\begin{gather*}
    \myunderbrace{x_1, x_2, \dotsc, x_k}{X_0},
    \ \myunderbrace{x_{k+1}, x_{k+2}, \dotsc, x_{2k}}{X_1},
    \ \dotsc,
    \ \myunderbrace{x_{bk-k+1}, x_{bk-k+2}, \dotsc, x_{bk}}{X_{b-1}}\,, 
    \\[5pt]
    \myunderbrace{\bot, \dotsc, \bot, y_1}{Y_0},
    \ \myunderbrace{y_{2}, y_{3}, \dotsc, y_{k+1}}{Y_1},
    \ \dotsc,
    \ \myunderbrace{y_{bk-2k+2}, y_{bk-2k+3}, \dotsc, y_{bk-k+1}}{Y_{b-1}}.
\end{gather*}
Here we use the symbol $\bot$ for padding.

\pagebreak

\paragraph{Piecewise Sorting.}

For each $j$, find a permutation $P_j$ of $\{0,1,\dotsc,k-1\}$ that sorts the elements of array $X_j$. That is, for all $0 \le s < t < k$ we have $X_j[P_j[s]] \le X_j[P_j[t]]$.

\paragraph{Postprocessing.}

The first output array $Y_0$ is trivial to compute: its only element is $Y_0[k-1] = X_0[P_0[h]]$. Let us now focus on the case of $1 \le j < b$. Define
\[
    \alpha_A = X_{j-1}, \quad
    \alpha_B = X_{j}, \quad
    \pi_A = P_{j-1}, \quad
    \pi_B = P_{j}, \quad
    \beta = Y_j.
\]
We will show how to find $\beta$ in time $O(k)$ given $\alpha_A$, $\alpha_B$, $\pi_A$, and $\pi_B$.

The basic idea is simple: We maintain \emph{sorted doubly-linked lists} $L_A$ and $L_B$, so that their union $L_A \cup L_B$ represents the sliding window. Initially, $L_A$ contains the elements of block $\alpha_A$ in an increasing order while $L_B$ is empty. At each time step $t = 0,1,\dotsc,k-1$, we remove element $\alpha_A[t]$ from $L_A$, and add element $\alpha_B[t]$ to $L_B$---we will shortly see how to do this efficiently. In the end, $L_A$ will be empty and $L_B$ will contain the elements of $\alpha_B$ in an increasing order. We augment the data structures $L_A$ and $L_B$ with additional pointers so that we can efficiently find the median of $L_A \cup L_B$ after each time step $t$.

The key challenge is related to the maintenance of the linked lists $L_A$ and $L_B$. At first, there seems to be inherent asymmetry:
\begin{itemize}[noitemsep]
    \item Maintenance of $L_A$ is easy: we only need to remove elements from a doubly-linked list.
    \item Maintenance of $L_B$ is hard: we have to add elements in the right place to keep $L_B$ sorted.
\end{itemize}
The key insight is that the situation is symmetric with respect to time.

\section{Main Ingredient: Time Reversal and Dancing Links}

Recall that our goal is to efficiently solve the following task, so that at each point list $L$ is a sorted doubly-linked list:
\begin{enumerate}[label=(P\arabic*),leftmargin=15mm]
    \item\label{P1} insert $\alpha[0], \alpha[1], \dots, \alpha[k-1]$ into $L$ one by one
\end{enumerate}
If we \emph{reverse the time}, our original process becomes
\begin{enumerate}[resume*]
    \item\label{P2} delete $\alpha[k-1], \alpha[k-2], \dots, \alpha[0]$ from $L$ one by one.
\end{enumerate}
Finally, to recover the original process, we reverse the time again, obtaining
\begin{enumerate}[resume*]
    \item\label{P3} undo the deletions of $\alpha[0], \alpha[1], \dots, \alpha[k-1]$ one by one.
\end{enumerate}
While \ref{P1} looks difficult to implement, \ref{P2} is easy to solve, and then \ref{P3} can be solved with Knuth's dancing links technique~\cite{knuth00dancinglinks}.

We will now explain this idea in more detail. Let us first fix the representation that we will use for linked lists. For list $L$, we will maintain two arrays of indexes, `$\prev$' and `$\next$'. If $\alpha[i]$ is in list $L$, then $\prev[i]$ is the index of the predecessor of $\alpha[i]$ and $\next[i]$ is the successor of $\alpha[i]$.

Given a permutation $\pi$ that sorts $\alpha$, we can easily initialise $\prev$ and $\next$ so that $L$ contains all elements of $\alpha$ in a sorted order; this takes $O(k)$ time. Deletions are also easy: to delete element $\alpha[i]$ from $L$, we simply set
\begin{align}
    \label{eq:del}
    \prev[\next[i]] \gets \prev[i], \quad \next[\prev[i]] \gets \next[i].
\end{align}
Knuth's~\cite{knuth00dancinglinks} observation is that \eqref{eq:del} is easy to reverse:
\begin{align}
    \label{eq:undo-del}
    \prev[\next[i]] \gets i, \quad \next[\prev[i]] \gets i.
\end{align}
In essence, index $i$ and pointers $\prev[i]$ and $\next[i]$ contain enough information to perfectly undo the deletion of $\alpha[i]$ from list $L$.

\pagebreak

Hence we can do the following:
\begin{enumerate}[noitemsep]
    \item \textbf{Construct} the sorted list $L$ with the help of permutation $\pi$.
    \item \textbf{Unwind} the list by deleting $\alpha[k-1], \alpha[k-2], \dots, \alpha[0]$ in this order. Now list $L$ is empty.
    \item At each time step $t = 0, 1, \dotsc, k-1$, \textbf{undo the deletion} of element $\alpha[t]$. In effect, we insert $\alpha[t]$ in the sorted doubly-linked list $L$ in the right position.
\end{enumerate}

The simple idea of combining piecewise sorting, time reversals, and dancing links is all that it takes to design an efficient median filter algorithm. The rest of this work presents the algorithm in more detail.

\begin{figure}
    \centering
    \includegraphics[page=2]{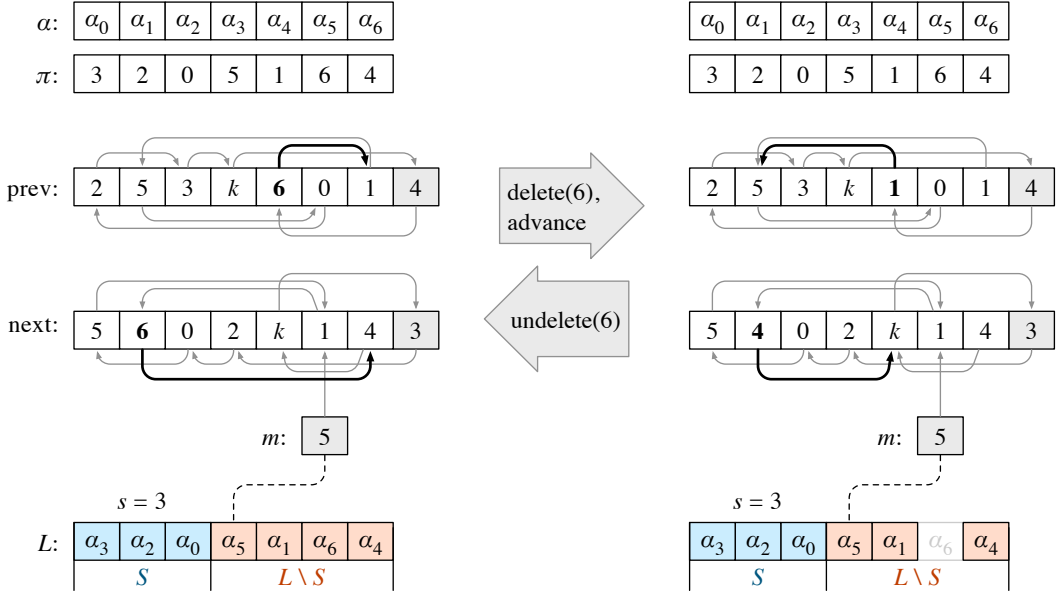}
    \caption{An example of the behaviour of the block data structure.}\label{fig:block}
\end{figure}

\section{Block Data Structure}\label{sec:block}

The algorithm relies on \emph{block data structures} (see Figure~\ref{fig:block}). Conceptually, a block data structure $B$ is a tuple $(\alpha_B,\pi_B,L_B,s_B)$, where array $\alpha_B$ is one block of input, array $\pi_B$ is the permutation that sorts $\alpha_B$, list $L_B$ contains some subset of the elements of $\alpha_B$, and $s_B$ is a counter between $0$ and $|L_B|$. We say that the first $s_B$ elements of list $L_B$ are \emph{small}, and the rest of the elements are \emph{large}. We will omit subscript $B$ when it is clear from the context.

When a block data structure is created, list $L_B$ will contain all $k=2h+1$ elements of $\alpha_B$, and the first $h$ of them will be small. We can then delete elements, undo deletions, and adjust~$s_B$.

\begin{figure}
\input{block-if}
\caption{Block data structure interface.}\label{fig:block-if}
\end{figure}

\subsection{Interface}

The block data structure $B$ supports the operations shown in Figure~\ref{fig:block-if}. The time complexity of $\Construct$ and $\Unwind$ is $O(k)$, and for all other operations it is $O(1)$. Deletions and undeletions must be \emph{properly nested}. For example, this sequence of operations is permitted:
\[
    \Delete(B,15), \Delete(B,3), \Undelete(B,3), \Undelete(B,15).
\]
However, this sequence of operations is not permitted:
\[
    \Delete(B,15), \Delete(B,3), \Undelete(B,15), \Undelete(B,3).
\]

\subsection{Assumption: Stable Sorting}\label{ssec:stable}

For convenience, we will assume that permutation $\pi$ is a stable sort of input $\alpha$. In practice, we can very efficiently find such a $\pi$ as follows: construct an array of pairs $(\alpha[i],i)$, sort it in lexicographic order with any sorting algorithm, and then pick the second element of each pair. This way we have constructed $\pi$ and also guaranteed stability.

We could also do without a stable sort if we slightly modified the algorithm. In essence, we could first construct the inverse permutation $\pi^{-1}$ and then use $\pi^{-1}[i]$ instead of $(\alpha[i],i)$ in comparisons.

\begin{figure}
\input{block-impl}
\caption{Block data structure implementation.}\label{fig:block-impl}
\end{figure}

\subsection{Implementation}

To implement the block data structure $B$, we will use the following fields in addition to input $\alpha$, permutation $\pi$, and counter $s$ (see Figure~\ref{fig:block}):
\begin{itemize}[noitemsep]
    \item $\prev$, $\next$: arrays of length $k+1$,
    \item $m$: integer between $0$ and $k$.
\end{itemize}
Assume that $L = (\alpha[p_0], \alpha[p_1], \dotsc, \alpha[p_{c-1}])$. For convenience, let $p_{-1} = p_c = k$. We will maintain the following invariants:
\begin{itemize}[noitemsep]
    \item $\next[k] = p_0$ and $\next[p_i] = p_{i+1}$ for all $i$,
    \item $\prev[k] = p_{c-1}$ and $\prev[p_i] = p_{i-1}$ for all $i$,
    \item $m = p_s$.
\end{itemize}
Define $\alpha[k] = +\infty$. Given any index $i$ with $\alpha[i] \in L$, it is easy to check if $\alpha[i]$ is small: see if
$
    (\alpha[i], i) < (\alpha[m], m)
$
in lexicographic order---recall that we assumed that this is compatible with permutation $\pi$.

We are now ready to explain how to implement each operation; the algorithm is given in Figure~\ref{fig:block-impl}. While some care is needed in the corner cases (e.g., $m = k$ or $i = m$), it is relatively easy to verify that the invariants are maintained and that the implementation is correct.

\section{Complete Algorithm}\label{sec:alg}

We will now present the complete sorting-based median filter algorithm. Recall that $n = bk$ and $k = 2h + 1$. The input vector $x$ is partitioned in arrays $X_0, X_1, \dotsc, X_{b-1}$. 

\subsection{Preprocessing}

For each $j$, find a permutation $P_j$ that sorts the elements of $X_j$. As discussed in Section~\ref{ssec:stable}, we will assume a stable sort.

\subsection{Postprocessing}

\newcommand{\stepa}{$(\ddagger)$}
\newcommand{\stepc}{$(\rotatebox[origin=c]{180}{$\dagger$})$}
\newcommand{\stepb}{$(\dagger)$}

The algorithm for the postprocessing phase is given in Figure~\ref{fig:alg}. It prints the elements of the output vector $y$ one by one. Recall that Figure~\ref{fig:overview} gives an illustration of the behaviour of the algorithm, and the block data structures $A$ and $B$ were defined in Section~\ref{sec:block}.

\begin{figure}
\input{postprocess}
\caption{Algorithm for median filtering: postprocessing phase.}\label{fig:alg}
\end{figure}

\subsection{Correctness}

Assuming that each block is sorted in a stable manner, the block data structures preserve the order of equal elements. The algorithm of Figure~\ref{fig:alg} then preserves the order of equal elements between blocks: in the case of ties, the elements of $A$ are considered to be smaller than the elements of $B$. Hence for the purposes of the analysis, we can w.l.o.g.\ \textbf{assume that all elements are distinct}---the algorithm behaves precisely as if we had originally broken ties with element indexes. In particular, for a block data structure $B$, we can conveniently interpret $L_B$ as a set. We will write $S_B \subseteq L_B$ for the set of small elements; hence $|S_B| = s_B$.

Let us now turn our attention to the algorithm of Figure~\ref{fig:alg}. Step \stepa{} clearly outputs the median of the first block. In the inner loop, $L_A \cup L_B$ correctly represents the sliding window (cf.\ Figure~\ref{fig:overview}). To show that the algorithm is correct, it is therefore sufficient to show that step \stepc{} outputs the median of the sliding window $L_A \cup L_B$.

Let $H_{AB}$ denote the $h$ smallest items of $L_A \cup L_B$. We will maintain the following invariant: before steps \stepb{} and \stepc, we have
\begin{equation}\label{eq:inv}
    s_A + s_B = h, \quad
    S_A \cup S_B = H_{AB}.
\end{equation}
If \eqref{eq:inv} holds, then the median of $L_A \cup L_B$ is the smallest large item of $L_A$ or $L_B$, and this is precisely what we print in step~\stepc.

We will now argue that the invariant indeed holds throughout the algorithm. Let us first make some easy observations:
\begin{enumerate}
    \item Invariant \eqref{eq:inv} holds before step \stepb{} for iteration $j=1$ and $i=0$.
    \item Assume that invariant \eqref{eq:inv} holds after step \stepc{} for some iteration $j=j_0$ and $i=i_0 < k-1$. Then it holds before step \stepb{} for iteration $j=j_0$ and $i=i_0+1$.
    \item Assume that invariant \eqref{eq:inv} holds after step \stepc{} for some iteration $j=j_0 < b-1$ and $i=k-1$. Then it holds before step \stepb{} for iteration $j=j_0+1$ and $i=0$.
\end{enumerate}
The nontrivial part is covered in the following lemma.

\begin{figure}[p]
    \centering
    \includegraphics[page=3]{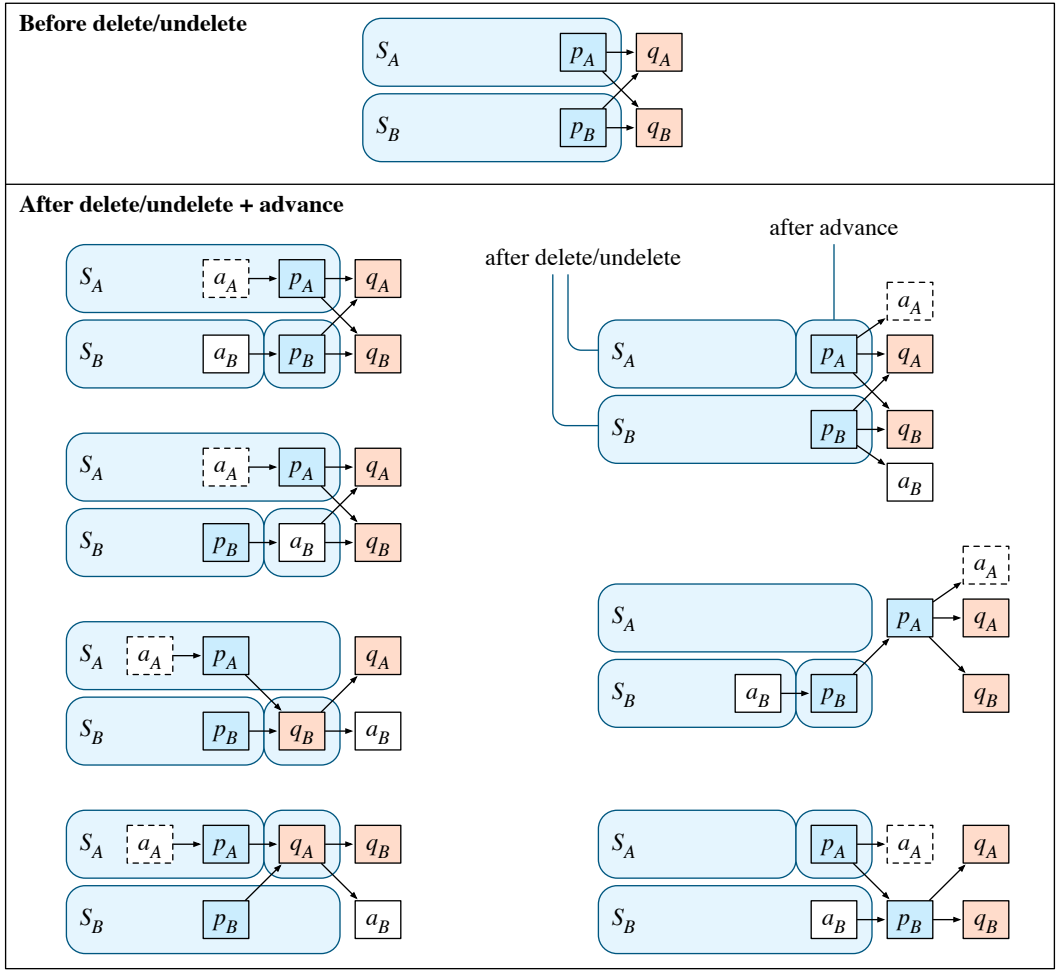}
    \caption{Case analysis in the proof of Lemma~\ref{lem:inv}. Here $a_A$ is the element that we delete from $A$ with the $\Delete(A,i)$ operation, and $a_B$ is the element that we add to $B$ with the $\Undelete(B,i)$ operation. Finally, we apply either $\Advance(A)$ or $\Advance(B)$. See also Table~\ref{tab:inv}.}\label{fig:inv}
\end{figure}

\begin{table}[p]
    \centering
    \newcommand{\myhead}[1]{\multicolumn{6}{@{}l@{}}{\textbf{\boldmath #1}} \\ \addlinespace}
\begin{tabular}{@{}llllll@{}}
\toprule
    Partial order &
    $H_{\mydot A \mydot B}$ &
    $S_{\mydot A}$ &
    $S_{\mydot B}$ &
    $S_{\myddot A}$ &
    $S_{\myddot B}$
    \\
\midrule
    \myhead{$a_A < p_A$}
    $a_B < p_B$ &
    $H_{AB} - a_A + a_B$ &
    $S_A - a_A$ &
    $S_B - p_B + a_B$ &
    $S_{\mydot A}$ &
    $S_{\mydot B} + p_B$
    \\
    $p_B < a_B < \min \{q_A, q_B\}$ &
    $H_{AB} - a_A + a_B$ &
    $S_A - a_A$ &
    $S_B$ &
    $S_{\mydot A}$ &
    $S_{\mydot B} + a_B$
    \\
    $q_B < \min \{q_A, a_B\}$ & 
    $H_{AB} - a_A + q_B$ &
    $S_A - a_A$ &
    $S_B$ &
    $S_{\mydot A}$ &
    $S_{\mydot B} + q_B$
    \\
    $q_A < \min \{q_B, a_B\}$ &
    $H_{AB} - a_A + q_A$ &
    $S_A - a_A$ &
    $S_B$ &
    $S_{\mydot A} + q_A$ &
    $S_{\mydot B}$
    \\
    \midrule
    \myhead{$p_a \le a_A$}
    $p_B < a_B$ &
    $H_{AB}$ &
    $S_A - p_A$ &
    $S_B$ &
    $S_{\mydot A} + p_A$ &
    $S_{\mydot B}$
    \\
    $a_B < p_B < p_A$ &
    $H_{AB} - p_A + a_B$ &
    $S_A - p_A$ &
    $S_B - p_B + a_B$ &
    $S_{\mydot A}$ &
    $S_{\mydot B} + p_B$
    \\
    $\max \{ a_B, p_A \} < p_B$ &
    $H_{AB} - p_B + a_B$ &
    $S_A - p_A$ &
    $S_B - p_B + a_B$ &
    $S_{\mydot A} + p_A$ &
    $S_{\mydot B}$
    \\
\bottomrule
\end{tabular}

    \caption{Case analysis in the proof of Lemma~\ref{lem:inv}. We use the shorthand notation $U + e = U \cup \{e\}$ and $U - e = U \setminus \{e\}$. See Figure~\ref{fig:inv} for illustrations.}\label{tab:inv}
\end{table}

\begin{lemma}\label{lem:inv}
    Assume that invariant \eqref{eq:inv} holds before step \stepb{} for some iteration $j=j_0$ and $i=i_0$. Then it holds before step \stepc{} for the same iteration $j=j_0$ and $i=i_0$.
\end{lemma}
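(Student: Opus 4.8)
The plan is to follow exactly the decomposition suggested by Table~\ref{tab:inv} and Figure~\ref{fig:inv}: between steps \stepb{} and \stepc{} only three things happen, namely $\Delete(A,i)$, $\Undelete(B,i)$, and at most one $\Advance$, so I would track the pair $(S_A,S_B)$ through these three updates and check that \eqref{eq:inv} is restored. Before doing so I would rewrite \eqref{eq:inv} in the equivalent form $S_A = L_A \cap H_{AB}$ and $S_B = L_B \cap H_{AB}$; this is equivalent because $H_{AB}$, being the $h$ smallest elements of the window, meets $L_A$ and $L_B$ in initial segments, while $S_A$ and $S_B$ are themselves initial segments of $L_A$ and $L_B$ whose sizes sum to $h=|H_{AB}|$. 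In this form $\Peek(A)$ is the smallest element of $L_A$ outside $H_{AB}$, the median printed in step \stepc{} is $\min\{\Peek(A),\Peek(B)\}$, and the goal becomes: the analogous identities hold for the new window $W' = (L_A\cup L_B) - a_A + a_B$ and its bottom-$h$ set $H'_{AB}$.

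First I would record the local effect of each primitive, reading them off the implementation in Figure~\ref{fig:block-impl}. Write $a_A=\alpha_A[i]$, $a_B=\alpha_B[i]$ for the departing and arriving elements, $p_A=\max S_A$, $p_B=\max S_B$ for the current largest small elements, and $q_A=\Peek(A)$, $q_B=\Peek(B)$ for the current smallest large elements; the elementary order relations $p_A<q_A$, $p_B<q_B$, $p_A<q_B$, $p_B<q_A$ hold because every small element is at most the $h$-th smallest of the window while every large element is at least the median. The rules I would verify are: $\Delete$ always decrements $s_A$, giving $S_{\dot A}=S_A-a_A$ with $\Peek(\dot A)=q_A$ when $a_A<p_A$ (small deletion), and $S_{\dot A}=S_A-p_A$ with $\Peek(\dot A)=p_A$ when $p_A\le a_A$ (large deletion, in which the old largest small element is promoted); dually $\Undelete$ leaves $s_B$ fixed and gives $S_{\dot B}=S_B-p_B+a_B$ with $\Peek(\dot B)=p_B$ when $a_B<p_B$, and $S_{\dot B}=S_B$ with $\Peek(\dot B)=\min\{q_B,a_B\}$ when $p_B<a_B$. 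These are precisely the $S_{\dot A}$ and $S_{\dot B}$ columns of Table~\ref{tab:inv}.

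The core of the proof is then a single combinatorial claim: after $\Delete$ and $\Undelete$ one has $S_{\dot A}\cup S_{\dot B}=H'_{AB}\setminus\{m'\}$, where $m'=\min\{\Peek(\dot A),\Peek(\dot B)\}$ is the $h$-th smallest element of $W'$. Granting this, the rest is immediate: since $\Delete$ has dropped the total small count to $h-1$, the test $\Small(A)+\Small(B)<h$ fires, and $\Advance$ inserts exactly $m'$ into whichever list attains the minimum peek, so that $S_{\ddot A}\cup S_{\ddot B}=H'_{AB}$ with $s_{\ddot A}+s_{\ddot B}=h$, which is \eqref{eq:inv} for $W'$. To prove the claim I would run the seven-way case split of Table~\ref{tab:inv}: first on whether the deletion is small ($a_A<p_A$) or large ($p_A\le a_A$), and then on the position of $a_B$ relative to $p_B$, $q_A$, $q_B$. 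In each case the order relations identify which element leaves $H_{AB}$ (either $a_A$ or the promoted $p_A$) and which enters it (one of $a_B$, $p_B$, $q_A$, $q_B$), yielding the $H_{\dot A\dot B}$ column; comparing $\Peek(\dot A)$ with $\Peek(\dot B)$ then selects the side to advance, yielding the $S_{\ddot A}$ and $S_{\ddot B}$ columns.

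I expect the case bookkeeping to be the main obstacle, and within it the genuinely delicate point is verifying, uniformly across cases, that $\min\{\Peek(\dot A),\Peek(\dot B)\}$ really is the $h$-th smallest of $W'$, so that the lone $\Advance$ lands on the correct list. The fussiest subcases are those in which $a_B$ is inserted strictly between $p_B$ and $q_B$, so that $a_B$ itself becomes $\Peek(\dot B)$ and one must decide whether it undercuts $\Peek(\dot A)$, together with the boundary value $a_A=p_A$, where the small- and large-deletion formulas must be checked to coincide. Finally I would clear the corner cases already flagged in Section~\ref{sec:block}: when $s_A=0$ no $\Advance$ occurs and $H_{AB}$ is undisturbed; an empty large region is handled by the convention $\Peek=+\infty$; and ties were already removed by the stable-sort assumption, so the distinct-element reasoning applies verbatim. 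Combined with the three elementary observations stated just before the lemma, this completes the induction.
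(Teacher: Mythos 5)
Your plan is essentially the paper's own proof: the same decomposition into $\Delete$, $\Undelete$, and at most one $\Advance$, the same six critical elements $a_A,p_A,q_A,a_B,p_B,q_B$, the same separate treatment of the $S_A=\emptyset$ corner case, and the same seven-case table; your reformulation $S_A = L_A\cap H_{AB}$ and the observation that the single $\Advance$ must capture exactly $\min\{\Peek(\mydot A),\Peek(\mydot B)\}$, the $h$-th smallest of the new window, are a clean repackaging of the paper's final check $S_{\myddot A}\cup S_{\myddot B}=H_{\myddot A\myddot B}$. Two small slips are worth fixing. First, in the corner case $s_A=0$ it is not true that $H_{AB}$ is undisturbed: if $a_B$ is smaller than $p_B=\max S_B$, the bottom-$h$ set changes to $H_{AB}-p_B+a_B$; the invariant nevertheless survives because $\Undelete$ simultaneously replaces $p_B$ by $a_B$ in $S_B$, which is exactly how the paper argues this case. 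Second, the boundary $a_A=p_A$ is a \emph{small} deletion (the implementation takes the ``$\alpha[i]$ was small'' branch, so $s$ drops and $m$ stays at $q_A$); your small- and large-deletion formulas agree on $S_{\mydot A}=S_A-p_A$ there but \emph{not} on the resulting peek ($q_A$ versus the now-deleted $p_A$), so this case must be folded into the $a_A<p_A$ rows rather than ``checked to coincide''. Neither issue affects the soundness of the overall approach.
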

\begin{proof}
    We will use the following convention to refer to the states of block data structures $A$ and~$B$:
    \begin{itemize}[noitemsep]
        \item $A$ and $B$ to refer to the original states before step \stepb{}, 
        \item $\mydot A$ and $\mydot B$ to refer to the new states after $\Delete$ and $\Undelete$ operations,
        \item $\myddot A$ and $\myddot B$ to refer to the new states before step \stepc{}.
    \end{itemize}

    First, assume that $S_A = \emptyset$. Then all elements of $L_A$ are strictly larger than any element of $S_B = H_{AB}$. If $\alpha_B[i]$ is large, $\Undelete(B,i)$ does not change $S_B$; if $\alpha_B[i]$ is small, $\Undelete(B,i)$ replaces the largest element of $S_B$ with $\alpha_B[i]$. In both cases, $s_{\mydot B} = h$ and therefore $\myddot A = \mydot A$ and $\myddot B = \mydot B$. We conclude that
    \[
        S_{\myddot A} = \emptyset, \quad S_{\myddot B} = H_{\myddot A \myddot B}.
    \]

    Second, assume that $S_A \ne \emptyset$. In this case the $\Delete(A,i)$ operation decreases the size of $S_A$, and we have
    \[
        s_{\mydot A} = s_A - 1, \quad
        s_{\mydot B} = s_B, \quad
        s_{\mydot A} + s_{\mydot B} = h - 1.
    \]
    Hence we will perform one $\Advance$ operation, after which $s_{\myddot A} + s_{\myddot B} = h$. However, it is not entirely obvious that this results in $S_{\myddot A} \cup S_{\myddot B} = H_{\myddot A \myddot B}$, too. To prove this, some case analysis is needed. The critical elements are
    \begin{align*}
        a_A &= \alpha_A[i], &
        p_A &= \max S_A, &
        q_A &= \min (L_A \setminus S_A), \\
        a_B &= \alpha_B[i], &
        p_B &= \max S_B, &
        q_B &= \min (L_B \setminus S_B).
    \end{align*}
    By definition, $\max\{p_A,p_B\} < \min\{q_A,q_B\}$, but this still leaves us with a large number of possible total orders of $\{a_A,p_A,q_A,a_B,p_B,q_B\}$ that we need to consider---also note that $a_A$ may be equal to $p_A$ or $q_A$. However, there are only seven cases that are essentially different from the perspective of what the algorithm does. The seven cases are illustrated in Figure~\ref{fig:inv} and listed in Table~\ref{tab:inv}---each of them corresponds to a partial order of $\{a_A,p_A,q_A,a_B,p_B,q_B\}$, and these partial orders together cover all possible total orders. We can verify that in each case
    \[
        S_{\mydot A} \subseteq H_{\mydot A B}, \quad
        S_{\mydot A} \subseteq H_{\mydot A \mydot B}, \quad
        S_{\mydot B} \subseteq H_{\mydot A \mydot B}, \quad
        S_{\myddot A} \cup S_{\myddot B} = H_{\myddot A \myddot B}.
        \qedhere
    \]
\end{proof}

\subsection{Implementations}

Two implementations of the sorting-based median algorithm are available online~\citeonline{suomela-mf}:
\begin{enumerate}[noitemsep]
    \item A simple Python implementation that is easy to follow.
    \item A highly optimised C++11 version.
\end{enumerate}
A compact version of the Python implementation without comments and assertions is also reproduced in Appendix~\ref{app:python}.

\section{Experiments}

We will now present the experiments in which we compare the performance of the new sorting-based median filter algorithm with \textbf{9 other implementations} of median filter algorithms, including the median filter functions from \emph{R}, \emph{Matlab}, \emph{GNU Octave}, \emph{SciPy}, and \emph{Mathematica}. It turns out that our sorting-based median filter algorithm performs consistently very well in comparison with the other implementations.

For a broad range of window sizes (between $h = 10$ and $h = 5\cdot10^7$) and for various input distributions, our implementation never loses by more than 20~\% in comparison with the fastest median filter algorithm from prior work. In many cases, our algorithm outperforms all competing implementations by a large factor---by a factor up to $3$ for uniform random inputs and by a factor up to $8$ for more benign input distributions.

\subsection{Implementations}\label{ssec:impl}

We will now describe the 11 implementations that we benchmarked. We start with our new algorithm and two simple baseline algorithms---all of these are optimised C++11 implementations:
\begin{description}
    \item[SortMedian:] The sorting-based median algorithm described in this work. For sorting, we use std::sort from the C++ standard library.
    \item[TreeMedian:] The sliding window is maintained as a pair of balanced search trees. We use std::multiset from the C++ standard library---this is typically a highly optimised implementation of a red-black tree.
    \item[MoveMedian:] The sliding window is maintained as a sorted array. Binary search is used to locate the part of the array that needs to be moved in order to accommodate the new element. Standard library routines std::copy and std::copy\_backward are used to efficiently move a block of data.
\end{description}
We have also included an efficient open source median filter implementation in our testing framework---while the algorithm idea dates back to 1980s, this is a modern C++ implementation from 2011:
\begin{description}
    \item[HeapMedian:] The sliding window is maintained as a double heap~\cite{astola89median,juhola91comparison,hardle95median-smooth}. This is Raffel's adaptation~\citeonline{raffel-code} of AShelly's implementation~\citeonline{ashelly-answer,ashelly-code}, with very minor modifications.
\end{description}
The source code of the above algorithms, as well as a unified testing framework, is available online~\citeonline{suomela-mf}. To ensure correctness, there is also a verification tool that compares the outputs of all four implementations against each other.

In addition to these C++ implementations, we also benchmark median filter routines that are available in the following scientific computing environments and signal processing packages:
\begin{itemize}[noitemsep]
    \item \emph{R} \citesoftware{r}, a free software for statistical computing,
    \item \emph{Matlab} \citesoftware{matlab}, a commercial numerical computing environment,
    \item \emph{GNU Octave} \citesoftware{octave}, a free numerical computing environment,
    \item \emph{SciPy} \citesoftware{scipy}, a collection of Python modules for scientific computing,
    \item \emph{Mathematica} \citesoftware{mathematica}, a commercial symbolic computing environment.
\end{itemize}
In total, six algorithm implementations were benchmarked:
\begin{description}
    \item[\emph{R}, runmed(``Turlach''):] The standard routine ``runmed'' \rmfcite{} in \emph{R}, with parameter ``algorithm'' set to ``Turlach''. This implementation maintains the sliding window as a double heap.
    \item[\emph{R}, runmed(``Stuetzle''):] As above, but with parameter ``algorithm'' set to ``Stuetzle''. This implementation maintains the sliding window as a sorted array.
    \item[\emph{Octave}, medfilt1:] Function ``medfilt1'' \citesoftware{octave-mf} in \emph{GNU Octave}'s ``signal'' package. Based on the source code, this function maintains a sorted array.
    \item[\emph{Matlab}, medfilt1:] Function ``medfilt1'' \citesoftware{matlab-mf} in \emph{Matlab}. Based on the source code, this function finds the median separately for each possible location of the sliding window.
    \item[\emph{SciPy}, scipy.signal.medfilt:] Function ``scipy.signal.medfilt'' \citesoftware{scipy-mf} in \emph{SciPy}. Based on the source code, this function finds the median by sorting the sliding window.
    \item[\emph{Mathematica}, MedianFilter:] Function ``MedianFilter'' \citesoftware{mathematica-mf} in \emph{Mathematica}. No source code or information on the algorithm is publicly available.
\end{description}
Finally, to be fair with software packages that rely on median filter implementations written in high-level languages, we also tested a \emph{very slow} implementation of our sorting-based algorithm:
\begin{description}
    \item[SortMedian.py:] The simple Python implementation from Appendix~\ref{app:python}.
\end{description}

\subsection{Comparison of All Implementations}\label{ssec:comp-other}

We will start with a broad comparison of all 11 implementations described in Section~\ref{ssec:impl}. The experiments were conducted as follows (with a few exceptions, see below):
\begin{itemize}
    \item We keep $bh$ fixed and vary $h$. This is approximately equivalent to keeping the size of input vector $n = (2h+1)b$ fixed and varying the window size $k = 2h+1$.
    \item Input consists of independent, uniformly distributed, random 32-bit integers, or its closest equivalent that is supported in the computing environment that we benchmark.
    \item Each experiment was ran 10 times with different random seeds.
    \item The plots report the median running times.
    \item The experiments were ran on the same OS X computer equipped with a 1.7 GHz Intel Core i7 processor and 8 GB of RAM.
\end{itemize}
The following exceptions were made:
\begin{itemize}
    \item \emph{Mathematica}: Only 1 experiment was ran, as this was by far the slowest implementation.
    \item \emph{Matlab}: This implementation required huge amounts of memory. In the end, we resorted to a high-end Linux computer equipped with a 2.8 GHz Intel Xeon processor and 256 GB of RAM. Only 1 experiment was ran, as this is clearly not among the fastest implementations.
    \item \emph{R}: The running times of the fastest experiments (below 10\,ms) are averages of 10 or 100 trials.
\end{itemize}
Detailed information on the software versions and computing platforms is given in Table~\ref{tab:versions}. The source code of the test suite and the raw test results are available online~\citeonline{suomela-mf}.

\begin{table}
    \centering
    \begin{tabular}{@{}l@{\qquad}l@{\qquad}l@{\qquad}l@{}}
\toprule
Software
& Function
& Versions
& Platform
\\
\midrule
\emph{R} \citesoftware{r}
& runmed \citesoftware{r-manual}
& R 3.1.0
& OS X
\\
\addlinespace
\emph{Octave} \citesoftware{octave}
& medfilt1 \citesoftware{octave-mf}
& GNU Octave 3.8.1
& OS X
\\
&
& signal 1.3.0
&
\\
\addlinespace
\emph{Matlab} \citesoftware{matlab}
& medfilt1 \citesoftware{matlab-mf}
& Matlab R2014a (8.3.0.532)
& Linux
\\
\addlinespace
\emph{SciPy} \citesoftware{scipy}
& scipy.signal.medfilt \citesoftware{scipy-mf}
& Python 2.7.7
& OS X
\\
&
& numpy 1.8.1
&
\\
&
& scipy 0.14.0
&
\\
\addlinespace
\emph{Mathematica} \citesoftware{mathematica}
& MedianFilter \citesoftware{mathematica-mf}
& Mathematica 9.0.1.0
& OS X
\\
\midrule
\multicolumn{4}{@{}l@{}}{\small OS X: Intel Core i7, 1.7 GHz, 8 GB RAM.} \\
\multicolumn{4}{@{}l@{}}{\small Linux: Intel Xeon, 2.8 GHz, 256 GB RAM.} \\
\bottomrule
\end{tabular}

    \caption{The software versions and platforms used in the experiments of Section~\ref{ssec:comp-other} and Figures \ref{fig:othera}--\ref{fig:otherb}.}\label{tab:versions}
\end{table}

First we ran experiments with small parameter values $bh = 10^4$ and $bh = 10^5$ for all implementations; the results are reported in Figure~\ref{fig:othera} in the appendix. From the log-log plots it is easy to see that most of the implementations exhibit running times that are approximately proportional to $nk$. Only four implementations provide a decent performance and scalability: SortMedian, HeapMedian, TreeMedian, and \emph{R}'s ``Turlach'' implementation.

Then we repeated the experiments with the most promising implementations for larger parameter values $bh = 10^6$ and $bh = 10^7$. The results are reported in Figure~\ref{fig:otherb}. The key finding is that SortMedian and HeapMedian consistently outperform all other implementations for large inputs. In the next section, we will focus on these two implementations.

\subsection{Comparison of HeapMedian and SortMedian}

We will now do a more detailed comparison of the fastest algorithms, HeapMedian and SortMedian. These tests were conducted as follows:
\begin{itemize}
    \item We keep $bh$ fixed and vary $h$. In total, we use 66 different combinations of $h$ and~$b$.
    \item We use 2 different versions of the implementations: one for 32-bit inputs and one for 64-bit inputs.
    \item We use 7 different generators to produce the input array $x$:
    \begin{enumerate}[noitemsep]
        \item \emph{asc}: ascending values, $x[i] = i$.
        \item \emph{desc}: descending values, $x[i] = n - i$.
        \item \emph{r-asc}: ascending values + small uniform random noise, $i \le x[i] < i+10^4$.
        \item \emph{r-desc}: descending values + small uniform random noise, $n-i \le x[i] < n-i+10^4$.
        \item \emph{r-large}: large uniform random integers (32-bit or 64-bit).
        \item \emph{r-small}: small uniform random integers, $0 \le x[i] < 10^4$.
        \item \emph{r-block}: piecewise constant data + small uniform random noise.
    \end{enumerate}
    \item For each combination of a version and a generator, we run the experiment for 5 times, with different random seeds.
    \item The plots report the median (solid curve) and the region from the 2nd decile to the 9th decile (shading). That is, the shaded area represents 80~\% of the experiments.
    \item The experiments were ran on Linux, using the Intel Xeon nodes of the Triton cluster \citesoftware{triton}, with one processor allocated for each experiment.
    \item To compile the code, we used GCC version 4.8.2 and GCC's implementation of the C++ standard library (a.k.a.\ libstdc++).
\end{itemize}
In total, this setup results in $66 \times 2 \times 7 \times 5 = 4620$ experiments per algorithm. The full source code and the raw test results are available online~\citeonline{suomela-mf}. An overview of the results is given in Figure~\ref{fig:summary} in the appendix, and selected examples of generator-specific results are shown in Figures \ref{fig:summary-r-large}--\ref{fig:summary-r-desc}. Note that the $y$ axis is linear in these plots.

As we can see from the plots, SortMedian typically performs better than HeapMedian. The running times are consistently low. HeapMedian is a clear winner only for very small window sizes, while SortMedian typically wins by a large factor for larger windows.

One the most interesting findings is shown in Figures \ref{fig:summary-r-asc} and \ref{fig:summary-r-desc}. These plots demonstrate that SortMedian makes a very effective use of partially sorted inputs, while such inputs are actually more difficult for HeapMedian than uniform random inputs. Perhaps the most important factor here is the locality of memory references and cache efficiency.

\section*{Acknowledgements}

Computer resources were provided by the Aalto University School of Science ``Science-IT'' project~\citesoftware{triton}. Many thanks to David Eppstein, Geoffrey Irving, Petteri Kaski, Pat Morin, and Saeed for comments and discussions. This problem has been discussed online on Theoretical Computer Science Stack Exchange \citeonline{suomela-mf1} and Google+ \citeonline{suomela-mf2}.

\setlength{\bibsep}{5pt}
\bibliographystyle{plain}
\bibliography{articles}

\begin{thebibliography}{10}

\bibitem{ahmad87median}
M.~Omair Ahmad and Duraisamy Sundararajan.
\newblock A fast algorithm for two-dimensional median filtering.
\newblock {\em IEEE Transactions on Circuits and Systems}, 34(11):1364--1374,
  1987.
\newblock
  \href{http://dx.doi.org/10.1109/TCS.1987.1086059}{\nolinkurl{doi:10.1109/TCS.1987.1086059}}.

\bibitem{astola89median}
Jaakko~T. Astola and T.~George Campbell.
\newblock On computation of the running median.
\newblock {\em IEEE Transactions on Acoustics, Speech, and Signal Processing},
  37(4):572--574, 1989.
\newblock
  \href{http://dx.doi.org/10.1109/29.17539}{\nolinkurl{doi:10.1109/29.17539}}.

\bibitem{ateman80median}
E.~Ataman, V.~K. Aatre, and K.~M. Wong.
\newblock A fast method for real-time median filtering.
\newblock {\em IEEE Transactions on Acoustics, Speech, and Signal Processing},
  28(4):415--421, 1980.
\newblock
  \href{http://dx.doi.org/10.1109/TASSP.1980.1163426}{\nolinkurl{doi:10.1109/TASSP.1980.1163426}}.

\bibitem{hardle95median-smooth}
W.~H{\"a}rdle and W.~Steiger.
\newblock Algorithm {AS} 296: {O}ptimal median smoothing.
\newblock {\em Applied Statistics}, 44(2):258, 1995.
\newblock
  \href{http://dx.doi.org/10.2307/2986349}{\nolinkurl{doi:10.2307/2986349}}.

\bibitem{huang79median}
Thomas~S. Huang, George~J. Yang, and Gregory~Y. Tang.
\newblock A fast two-dimensional median filtering algorithm.
\newblock {\em IEEE Transactions on Acoustics, Speech, and Signal Processing},
  27(1):13--18, 1979.
\newblock
  \href{http://dx.doi.org/10.1109/TASSP.1979.1163188}{\nolinkurl{doi:10.1109/TASSP.1979.1163188}}.

\bibitem{juhola91comparison}
Martti Juhola, Jyrki Katajainen, and Timo Raita.
\newblock Comparison of algorithms for standard median filtering.
\newblock {\em IEEE Transactions on Signal Processing}, 39(1):204--208, 1991.
\newblock
  \href{http://dx.doi.org/10.1109/78.80784}{\nolinkurl{doi:10.1109/78.80784}}.

\bibitem{king24seasonal}
W.~Willford~I. King.
\newblock An improved method for measuring the seasonal factor.
\newblock {\em Journal of the American Statistical Association},
  19(147):301--313, 1924.
\newblock
  \href{http://dx.doi.org/10.1080/01621459.1924.10502887}{\nolinkurl{doi:10.1080/01621459.1924.10502887}}.

\bibitem{knuth00dancinglinks}
Donald~E. Knuth.
\newblock Dancing links.
\newblock In Jim Davies, Bill Roscoe, and Jim Woodcock, editors, {\em
  Millennial Perspectives in Computer Science: Proceedings of the 1999
  Oxford--Microsoft Symposium in Honour of Sir Tony Hoare}, Cornerstones of
  Computing, pages 187--214. Palgrave Macmillan, 2000.
\newblock \href{http://arxiv.org/abs/cs/0011047}{\nolinkurl{arXiv:cs/0011047}}.

\bibitem{krizanc05range-mode}
Danny Krizanc, Pat Morin, and Michiel Smid.
\newblock Range mode and range median queries on lists and trees.
\newblock {\em Nordic Journal of Computing}, 12(1):1--17, 2005.

\bibitem{oflazer83median}
Kemal Oflazer.
\newblock Design and implementation of a single-chip 1-{D} median filter.
\newblock {\em IEEE Transactions on Acoustics, Speech, and Signal Processing},
  31(5):1164--1168, 1983.
\newblock
  \href{http://dx.doi.org/10.1109/TASSP.1983.1164203}{\nolinkurl{doi:10.1109/TASSP.1983.1164203}}.

\bibitem{perreault07median}
Simon Perreault and Patrick Hebert.
\newblock Median filtering in constant time.
\newblock {\em IEEE Transactions on Image Processing}, 16(9):2389--2394, 2007.
\newblock
  \href{http://dx.doi.org/10.1109/TIP.2007.902329}{\nolinkurl{doi:10.1109/TIP.2007.902329}}.

\bibitem{tukey77eda}
John~W. Tukey.
\newblock {\em Exploratory Data Analysis}.
\newblock Addison-Wesley, Reading, MA, 1977.

\end{thebibliography}


\begin{thebibliography}{1}

\bibitem{ashelly-code}
{AShelly}.
\newblock mediator.c.
\newblock GitHub, 2011.
\newblock \url{https://gist.github.com/ashelly/5665911}, last modified
  2013-05-28, accessed 2014-06-05.

\bibitem{ashelly-answer}
{AShelly}.
\newblock Rolling median in {C} -- {T}urlach implementation.
\newblock StackOveflow, May 2011.
\newblock \url{http://stackoverflow.com/a/5970314}, last modified 2013-05-30,
  accessed 2014-06-05.

\bibitem{raffel-code}
Colin Raffel.
\newblock median-filter.
\newblock GitHub, 2012.
\newblock \url{https://github.com/craffel/median-filter}, last modified
  2012-12-21, accessed 2014-05-20.

\bibitem{suomela-mf}
Jukka Suomela.
\newblock Median filter, version 2014-06-05.
\newblock ZENODO, 2014.
\newblock
  \href{http://dx.doi.org/10.5281/zenodo.10325}{\nolinkurl{doi:10.5281/zenodo.10325}}.
  Also available on \url{https://github.com/suomela/median-filter} and
  \url{https://bitbucket.org/suomela/median-filter}.

\bibitem{suomela-mf1}
Jukka Suomela, David Eppstein, Geoffrey Irving, et~al.
\newblock Nontrivial algorithm for computing a sliding window median.
\newblock Theoretical Computer Science Stack Exchange, March 2014.
\newblock \url{http://cstheory.stackexchange.com/q/21730}, last modified
  2014-04-13, accessed 2014-06-05.

\bibitem{suomela-mf2}
Jukka Suomela, Pat Morin, and David Eppstein.
\newblock Nontrivial algorithm for computing a sliding window median.
\newblock Google+, March 2014.
\newblock \url{https://plus.google.com/+JukkaSuomela/posts/JWtBkytfJsA}, last
  modified 2014-03-26, accessed 2014-06-05.

\end{thebibliography}


\begin{thebibliography}{10}

\bibitem{triton}
Science-{IT}.
\newblock Aalto University, School of Science, July 2012.
\newblock
  \url{http://sci.aalto.fi/en/research/muu_tutkimustoiminta/science-it/},
  accessed 2014-06-05.

\bibitem{photoshop-mf}
{Adobe}.
\newblock {\em Photoshop CS 6 Help, Filter effects reference}, 2014.
\newblock
  \url{https://helpx.adobe.com/photoshop/using/filter-effects-reference.html},
  accessed 2014-06-05.

\bibitem{gimp-mf}
{GIMP Documentation Team}.
\newblock {\em GNU Image Manipulation Program, Enhance filters, Despeckle},
  2014.
\newblock \url{http://docs.gimp.org/en/plug-in-despeckle.html}, accessed
  2014-06-05.

\bibitem{scipy}
Eric Jones, Travis Oliphant, Pearu Peterson, et~al.
\newblock {\em {SciPy}: Open source scientific tools for {Python}}, 2001--2014.
\newblock \url{http://www.scipy.org/}, accessed 2014-06-05.

\bibitem{matlab}
{MathWorks}.
\newblock {\em {MATLAB R2014a}}.
\newblock Natick, Massachusetts, 2014.

\bibitem{matlab-mf}
{MathWorks}.
\newblock {\em MATLAB Signal Processing Toolbox, medfilt1}.
\newblock Natick, Massachusetts, 2014.
\newblock \url{http://www.mathworks.se/help/signal/ref/medfilt1.html}, accessed
  2014-06-05.

\bibitem{octave}
{Octave community}.
\newblock {\em {GNU Octave 3.8.1}}, 2014.
\newblock \url{http://www.gnu.org/software/octave/}.

\bibitem{octave-mf}
{Octave community}.
\newblock {\em Octave signal package, medfilt1}, January 2014.
\newblock \url{http://octave.sourceforge.net/signal/function/medfilt1.html},
  accessed 2014-06-05.

\bibitem{r}
{R Core Team}.
\newblock {\em R: A Language and Environment for Statistical Computing}.
\newblock R Foundation for Statistical Computing, Vienna, Austria, 2014.
\newblock \url{http://www.r-project.org/}.

\bibitem{r-manual}
{R Core Team}.
\newblock {\em R: A Language and Environment for Statistical Computing,
  Reference Index, Version 3.1.0}.
\newblock Vienna, Austria, April 2014.
\newblock \url{http://cran.r-project.org/doc/manuals/r-release/fullrefman.pdf},
  accessed 2014-06-05.

\bibitem{scipy-mf}
{Scipy Community}.
\newblock {\em SciPy Reference Guide, Signal Processing, scipy.signal.medfilt},
  May 2014.
\newblock
  \url{http://docs.scipy.org/doc/scipy/reference/generated/scipy.signal.medfilt.html},
  accessed 2014-06-05.

\bibitem{mathematica}
{Wolfram}.
\newblock {\em {Mathematica 9}}.
\newblock Champaign, Illinois, 2012.

\bibitem{mathematica-mf}
{Wolfram}.
\newblock {\em Mathematica 9 Documentation Center, MedianFilter}.
\newblock Champaign, Illinois, 2014.
\newblock
  \url{https://reference.wolfram.com/mathematica/ref/MedianFilter.html},
  accessed 2014-06-05.

\end{thebibliography}
{
\renewcommand{\section}{\subsection}
\bibliographystylesoftware{plain}
\bibliographysoftware{software}
\bibliographystyleonline{plain}
\bibliographyonline{online}
}

\newpage
\appendix
\section{Appendix}

\subsection{Python Implementation}\label{app:python}

\verbatiminput{SortMedian.py}

\newpage

\begin{figure}
    \centering
    \includegraphics{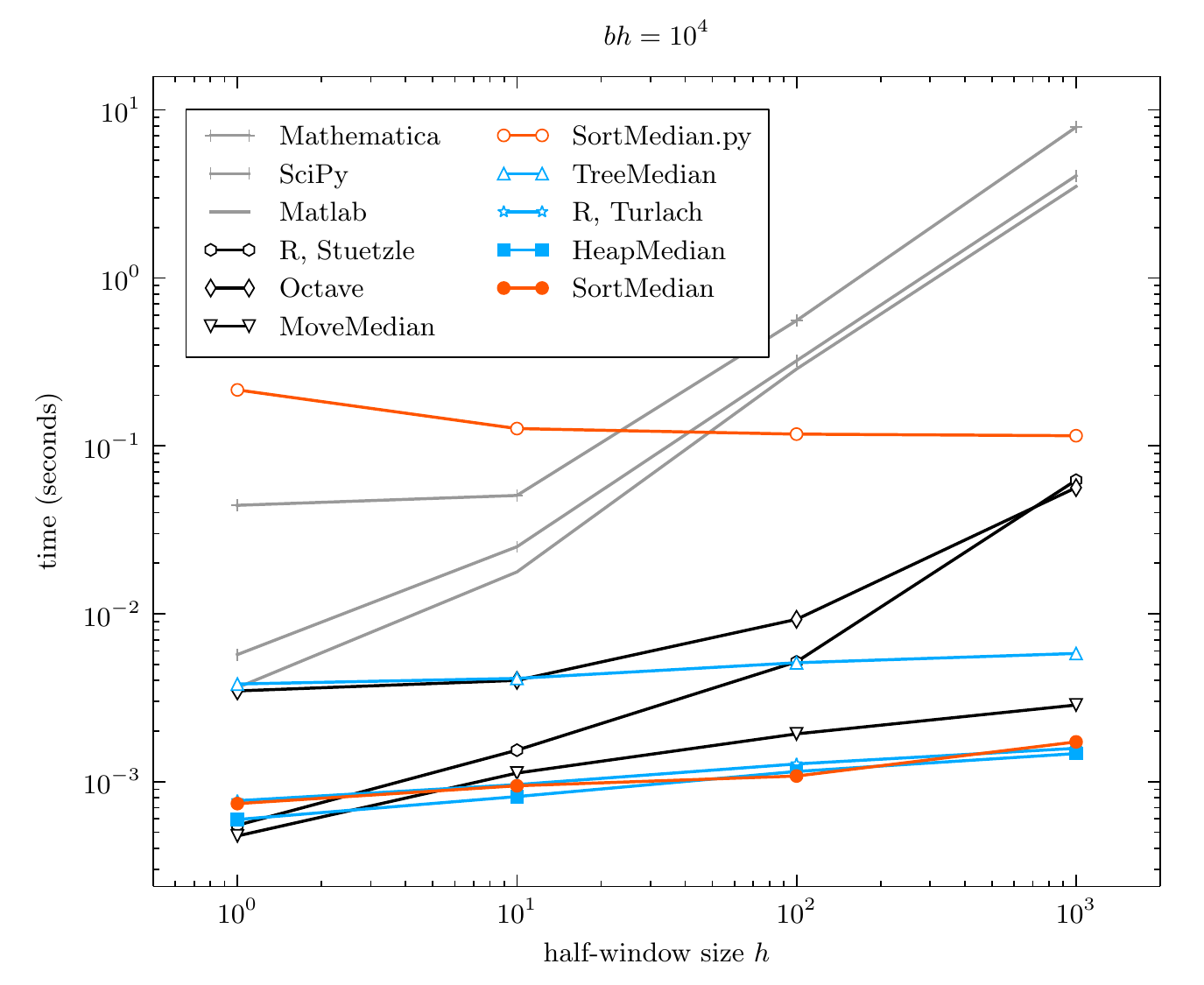}\\
    \includegraphics{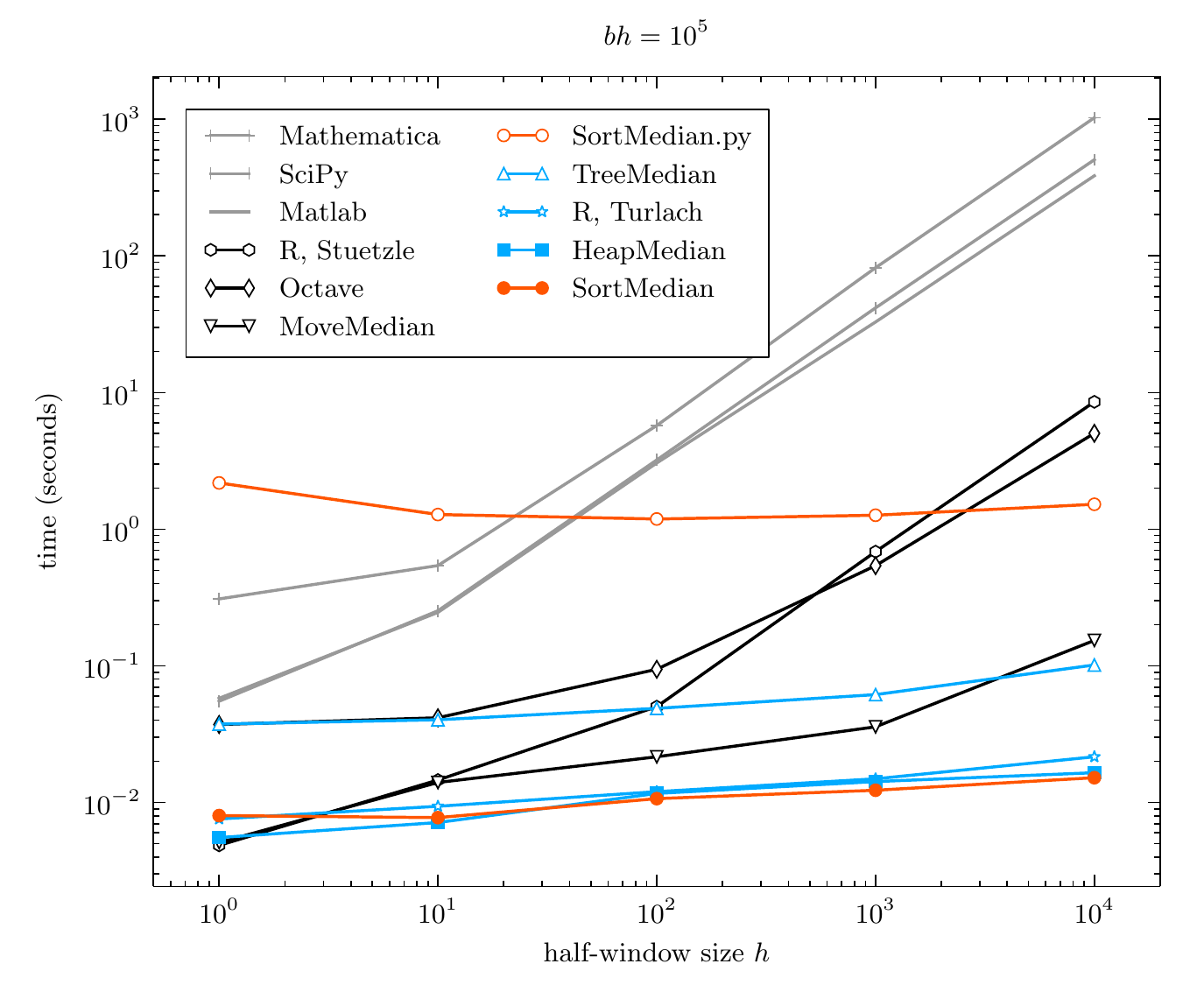}
    \caption{Comparison of all implementations, for $bh = 10^4$ and $bh = 10^5$.}\label{fig:othera}
\end{figure}

\begin{figure}
    \centering
    \includegraphics{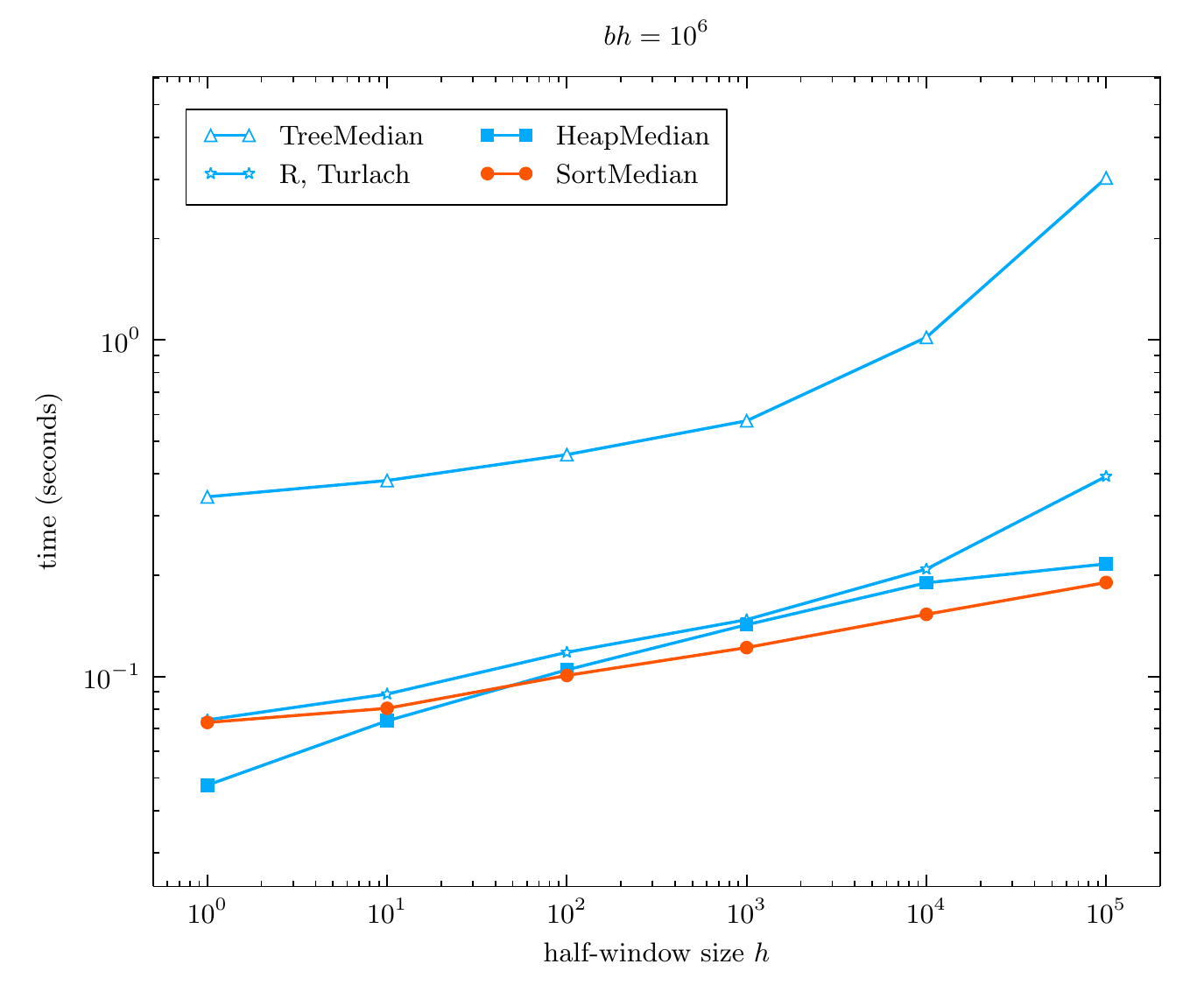}\\
    \includegraphics{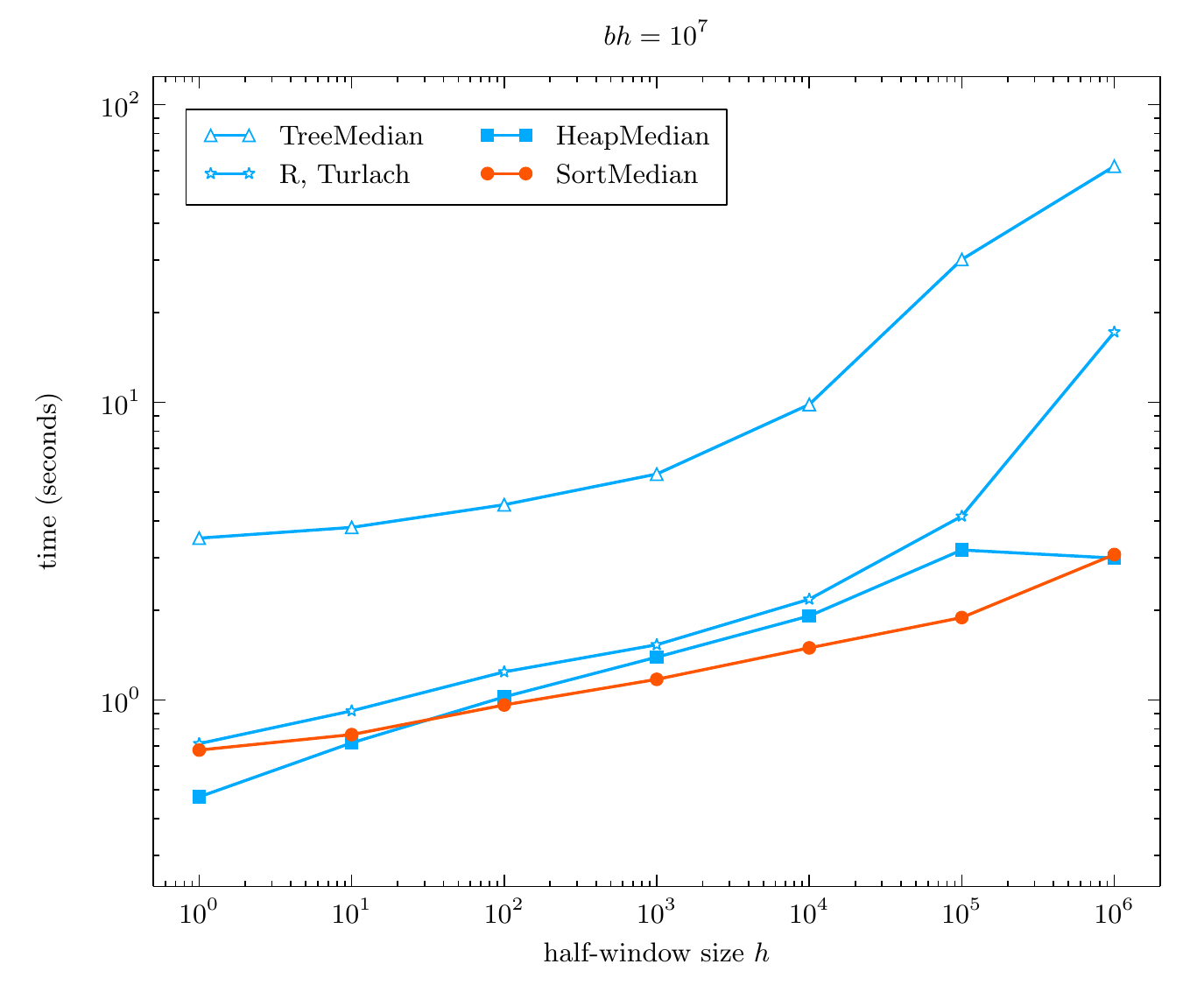}
    \caption{Comparison of the fastest implementations, for $bh = 10^6$ and $bh = 10^7$.}\label{fig:otherb}
\end{figure}

\newcommand{\figexpl}{Solid lines are medians. The shaded area represents 80~\% of the experiments.}

\begin{figure}
    \centering
    \includegraphics{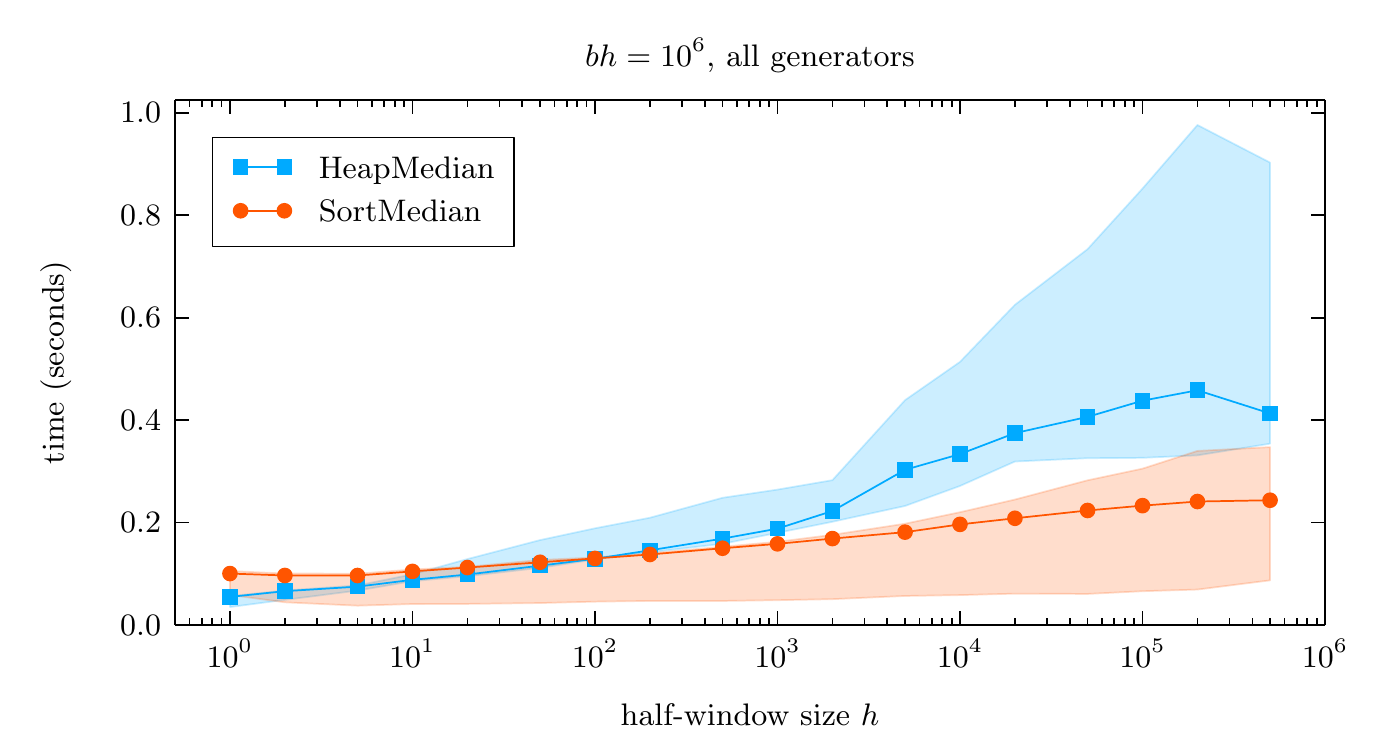}\\
    \includegraphics{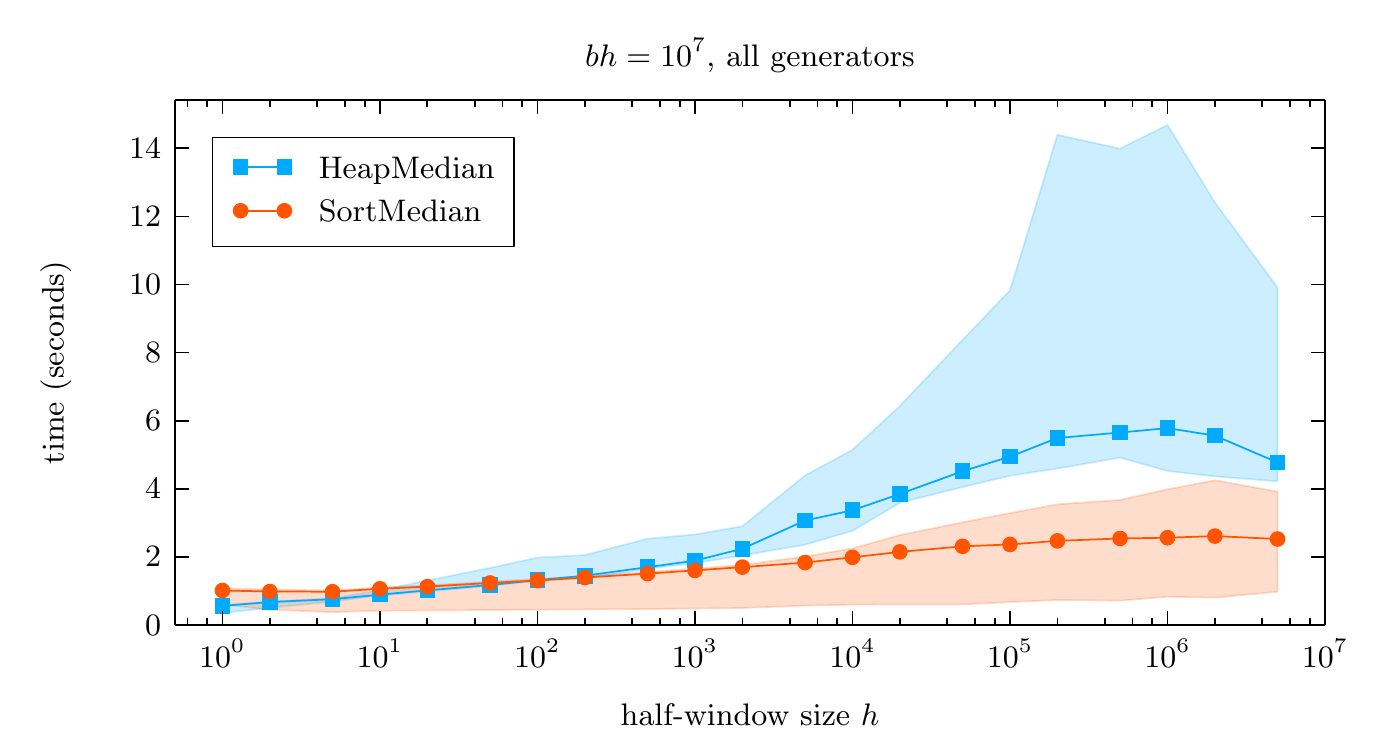}\\
    \includegraphics{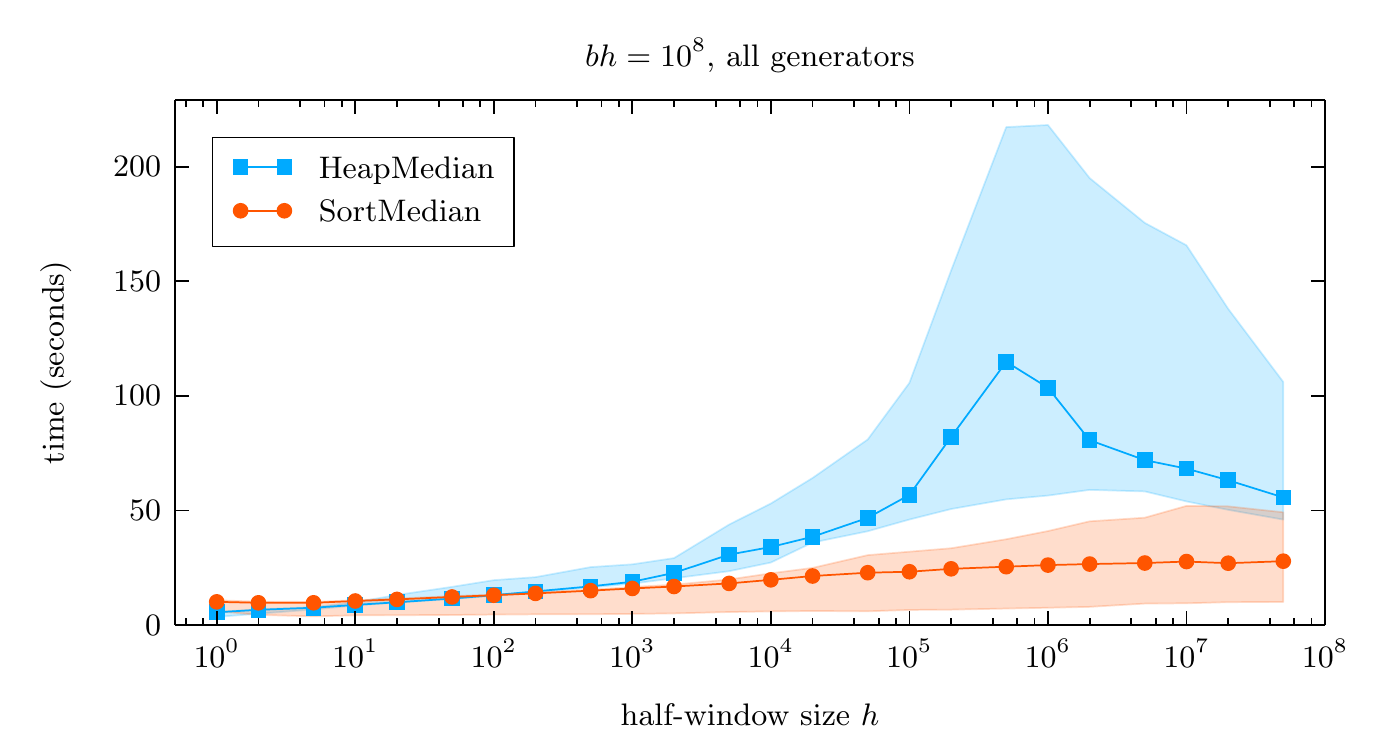}
    \caption{Comparison of SortMedian and HeapMedian. \figexpl}\label{fig:summary}
\end{figure}

\begin{figure}
    \centering
    \includegraphics{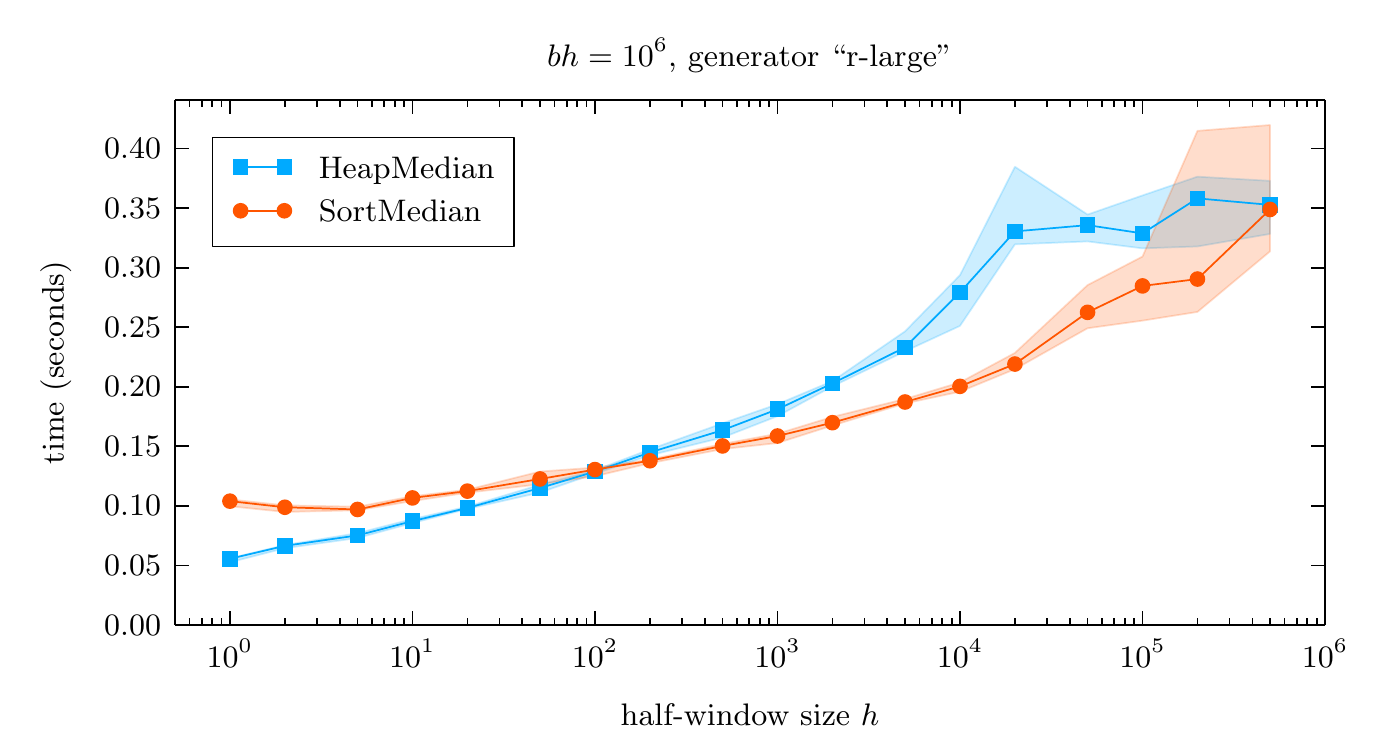}\\
    \includegraphics{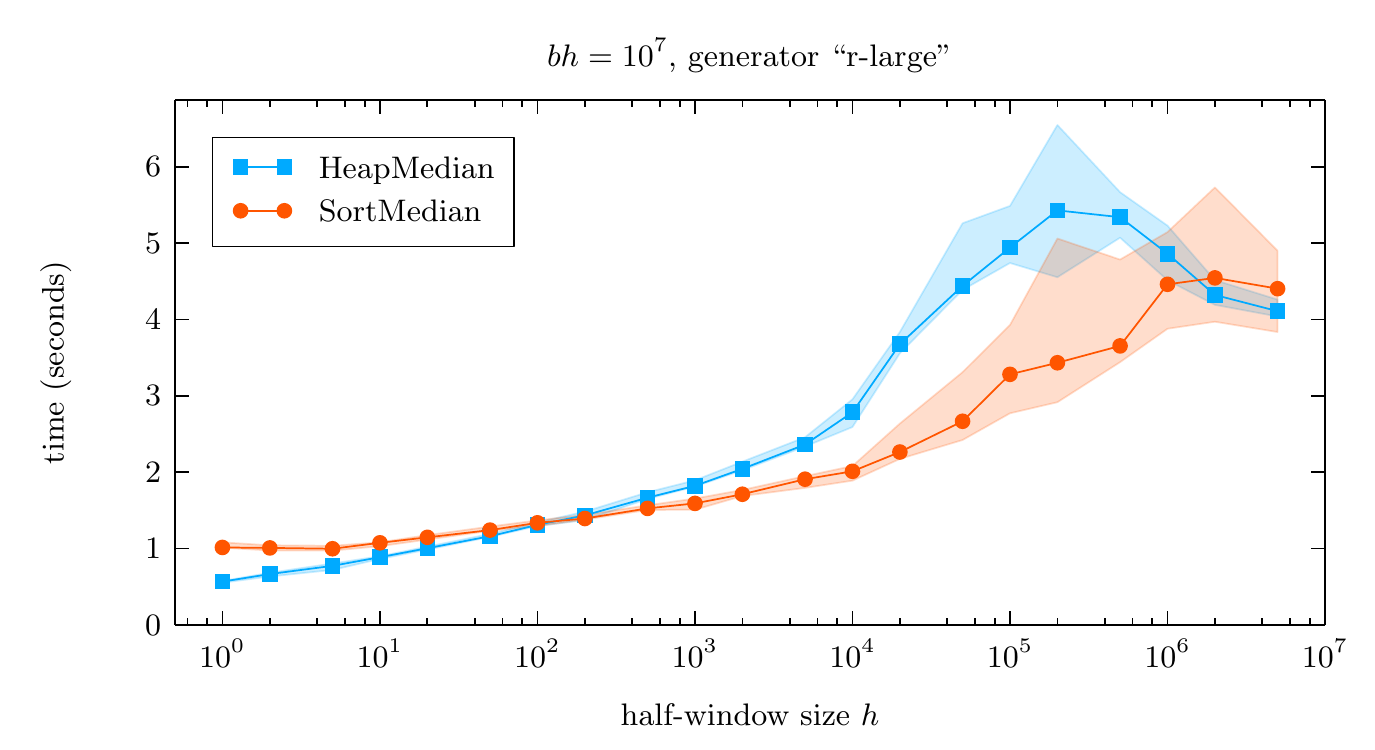}\\
    \includegraphics{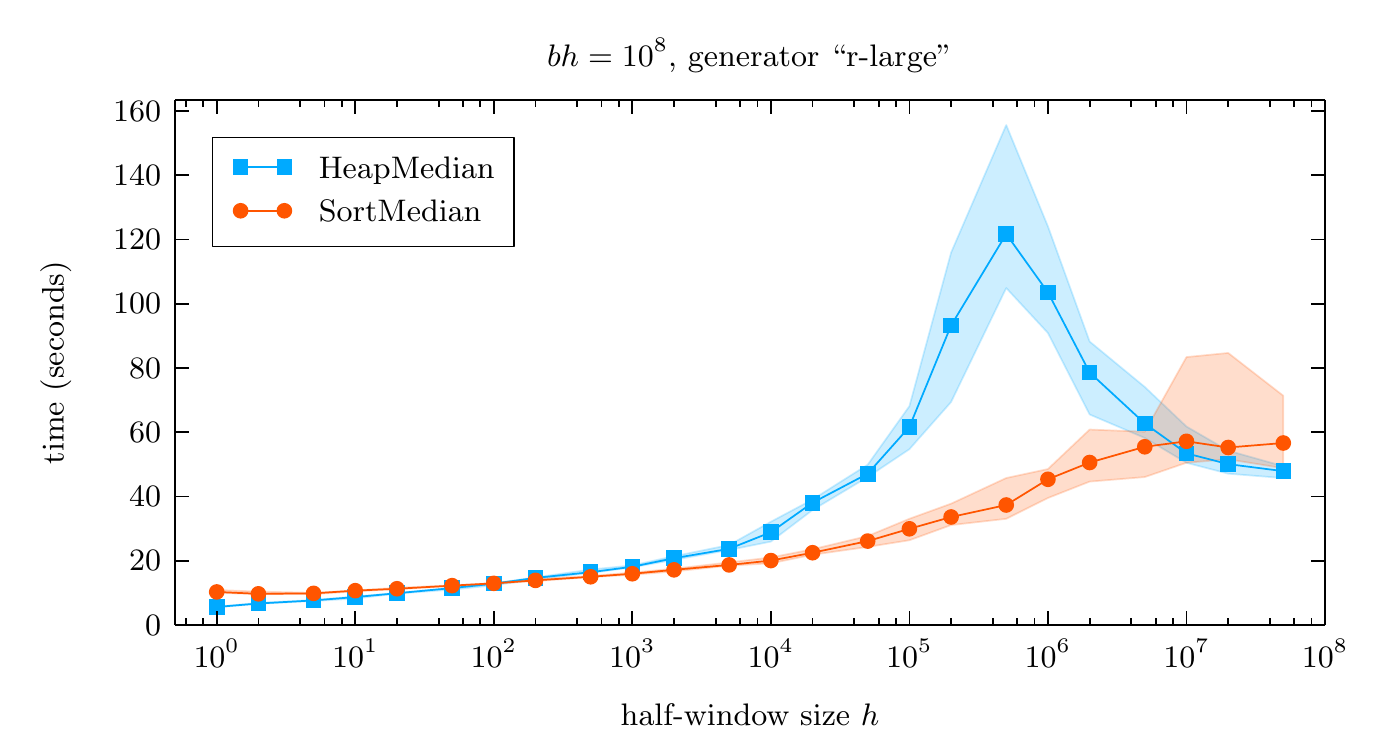}
    \caption{Comparison of SortMedian and HeapMedian for generator \emph{r-large}. \figexpl}\label{fig:summary-r-large}
\end{figure}

\begin{figure}
    \centering
    \includegraphics{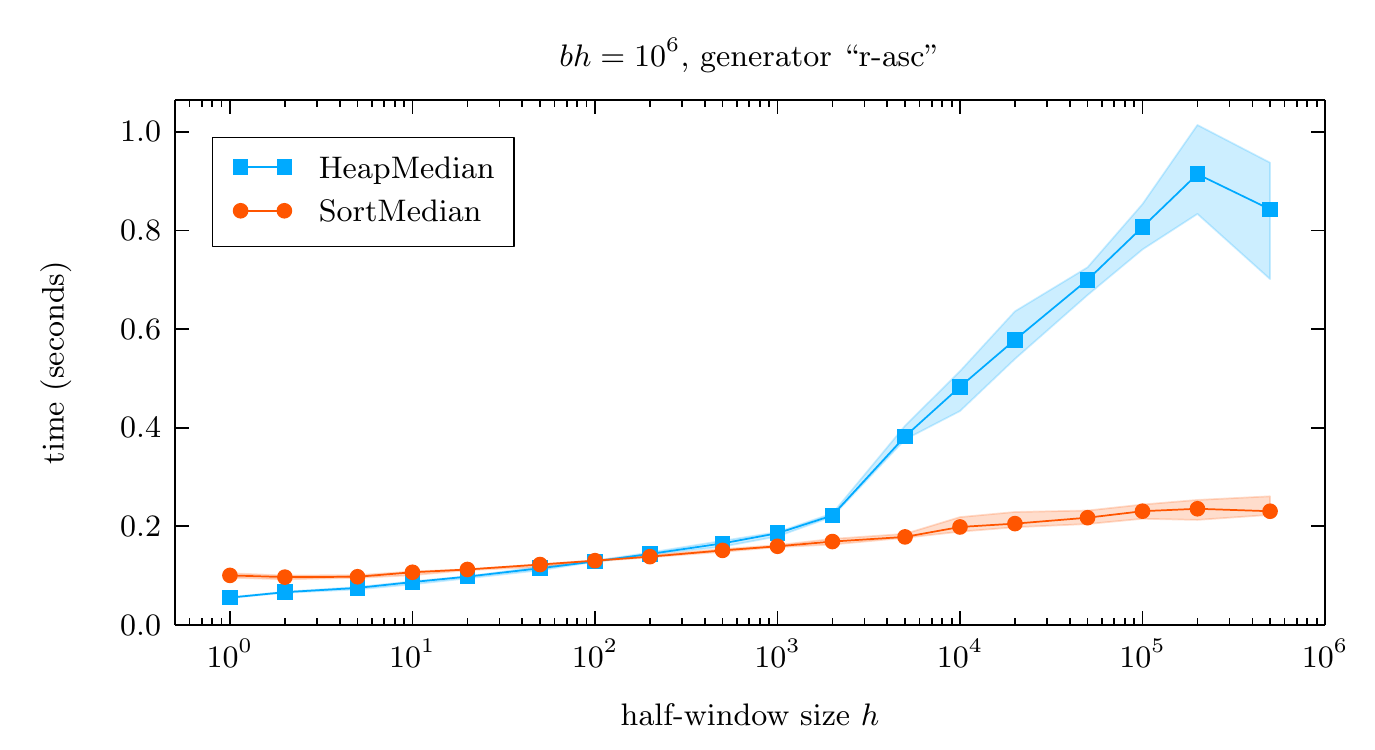}\\
    \includegraphics{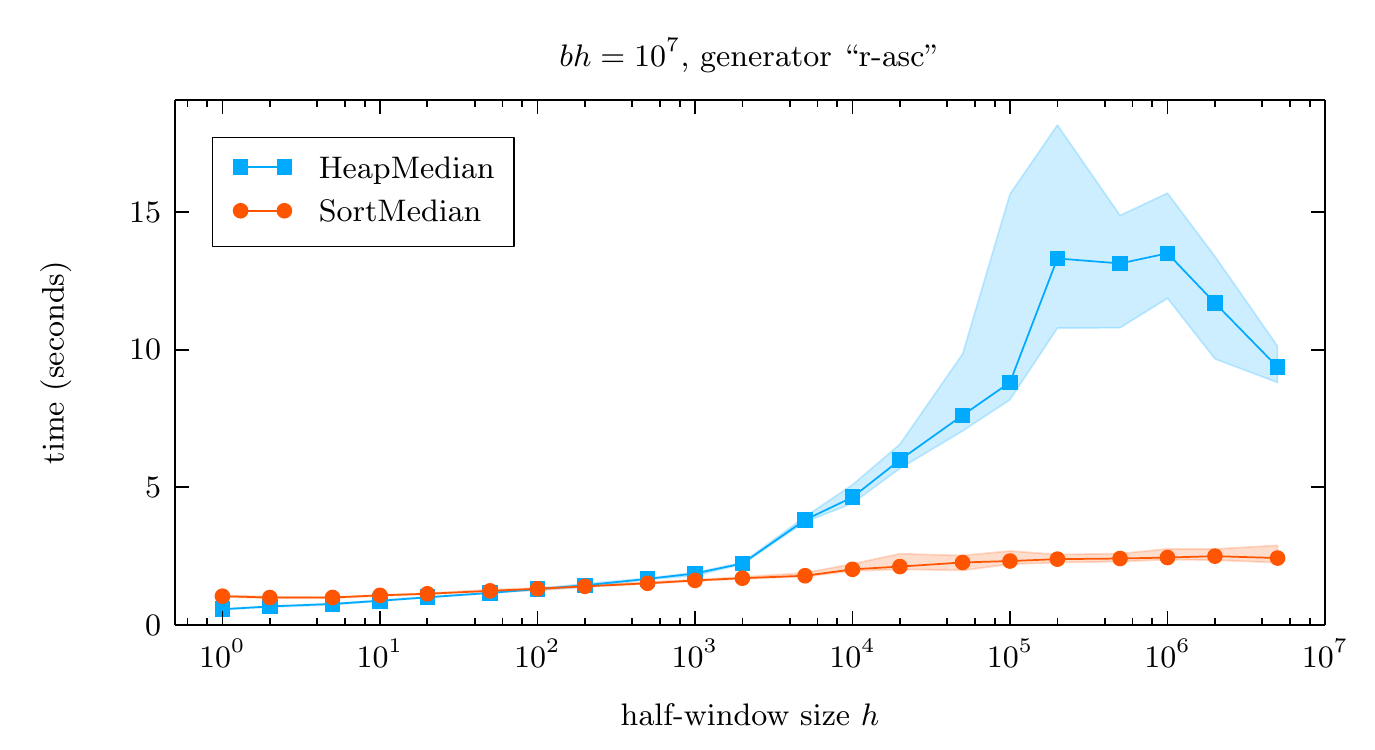}\\
    \includegraphics{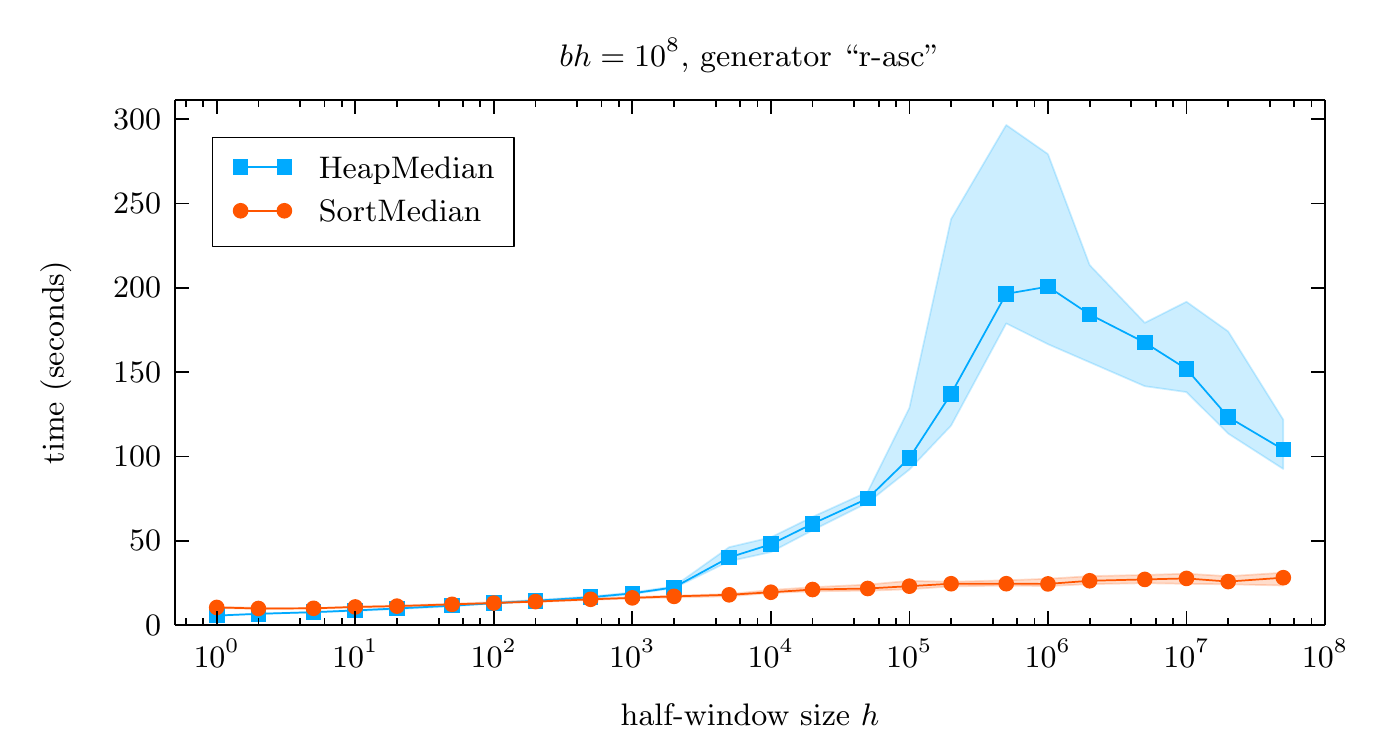}
    \caption{Comparison of SortMedian and HeapMedian for generator \emph{r-asc}. \figexpl}\label{fig:summary-r-asc}
\end{figure}

\begin{figure}
    \centering
    \includegraphics{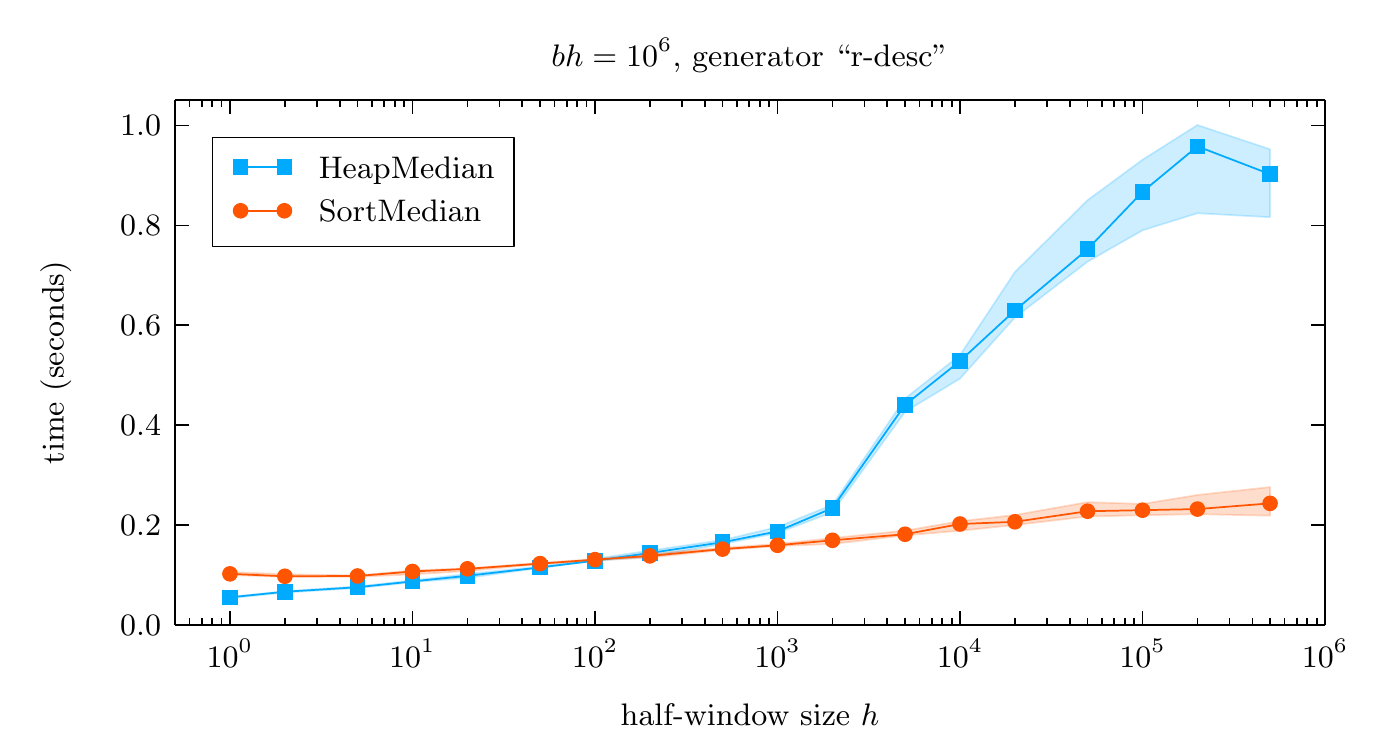}\\
    \includegraphics{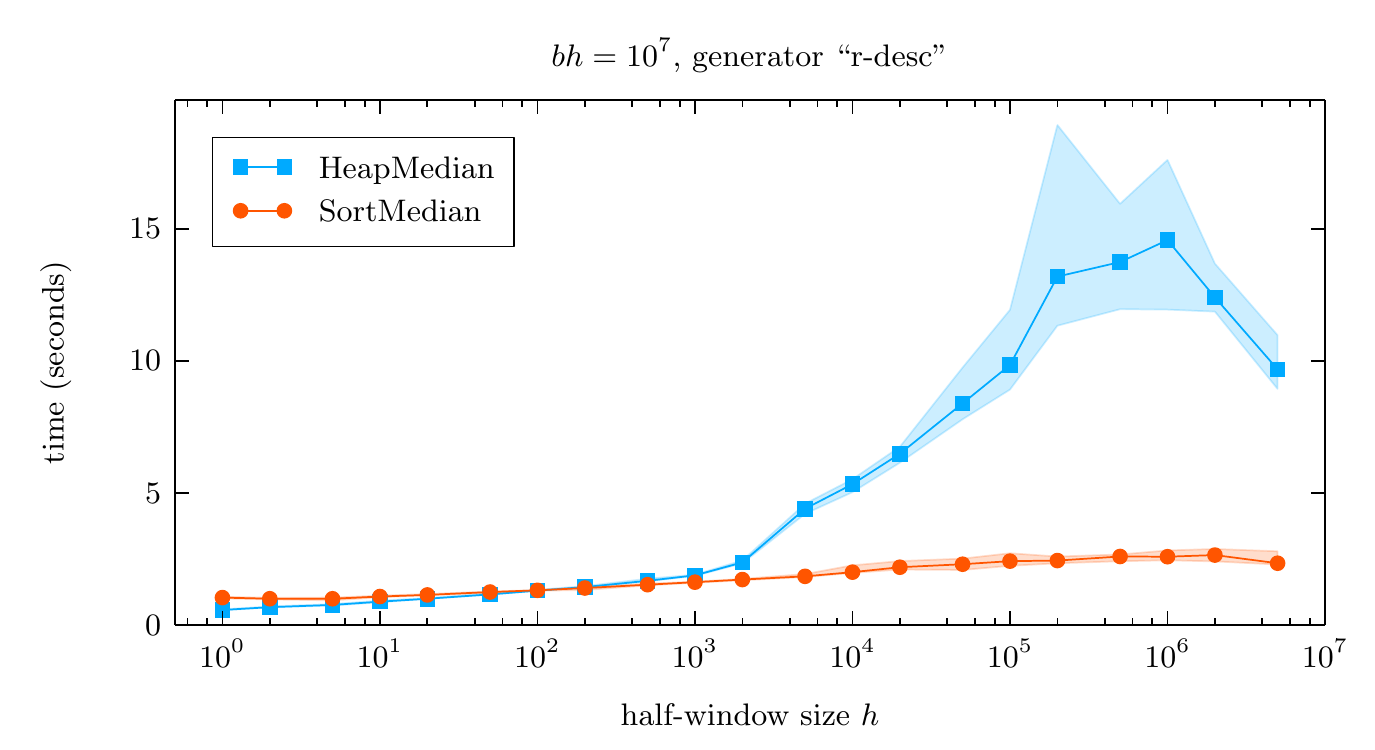}\\
    \includegraphics{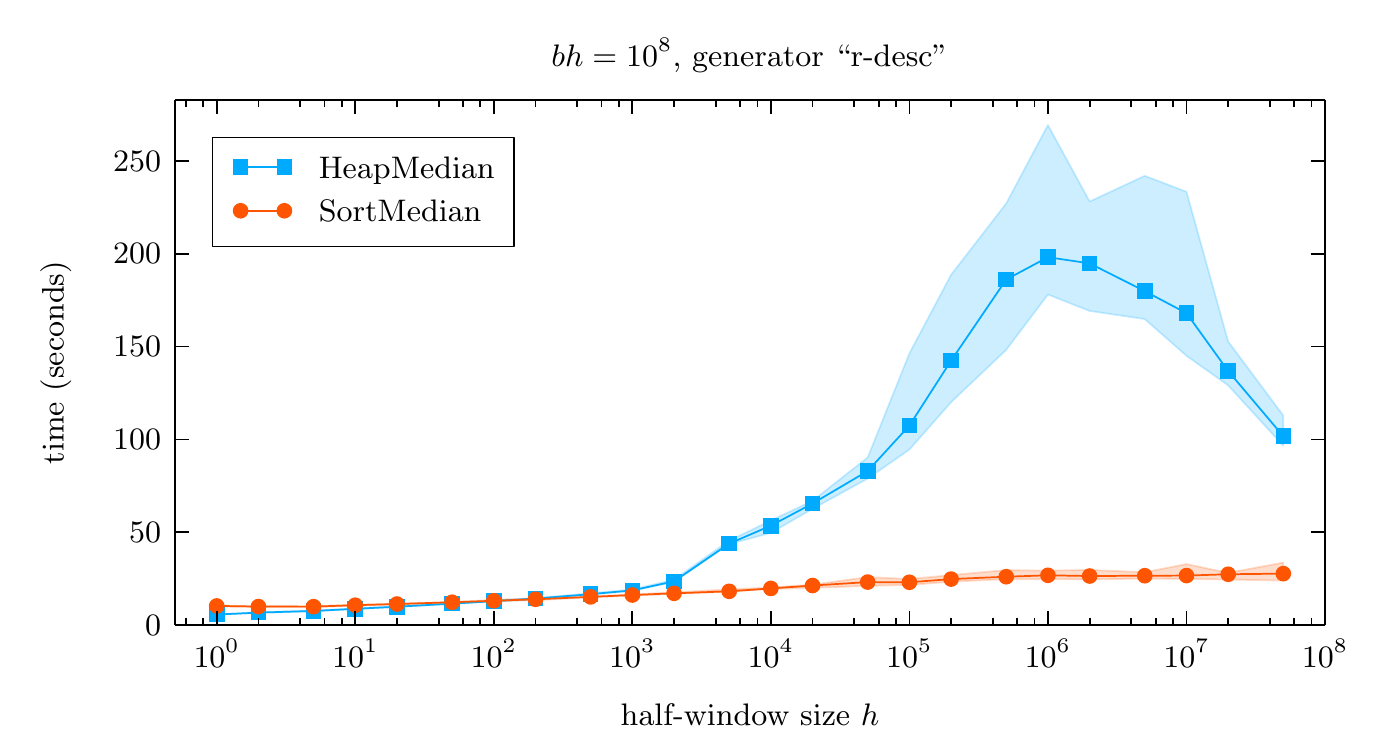}
    \caption{Comparison of SortMedian and HeapMedian for generator \emph{r-desc}. \figexpl}\label{fig:summary-r-desc}
\end{figure}

\end{document}